\newcommand{\BWTs}{\ensuremath{\mathrm{BWT} (s)}}
\newcommand{\tag}[2]{\ensuremath{\mathrm{#1}_{#2}}}
\begin{document}

\title{Bounds for Compression in Streaming Models
\thanks{Partially supported by the Italian MIUR and Italian-Israeli FIRB
project ``Pattern Discovery Algorithms in Discrete Structures, with
Applications to Bioinformatics''.}}
\author{}
\institute{}
%\author{Travis Gagie and Giovanni Manzini}
%\institute{Dipartimento di Informatica\\
%    Universit\`{a} del Piemonte Orientale\\
%    \email{\{travis,manzini\}@mfn.unipmn.it}}
\maketitle

\vspace{-10ex}

\begin{abstract}
Compression algorithms and streaming algorithms are both powerful tools for dealing with massive data sets, but many of the best compression algorithms --- e.g., those based on the Burrows-Wheeler Transform --- at first seem incompatible with streaming.  In this paper we consider several popular streaming models and ask in which, if any, we can compress as well as we can with the BWT.  We first prove a nearly tight tradeoff between memory and redundancy for the Standard, Multi\-pass and W-Streams models, demonstrating a bound that is achievable with the BWT but unachievable in those models.  We then show we can compute the related Schindler Transform in the StreamSort model and the BWT in the Read-Write model and, thus, achieve that bound.
\end{abstract}

\section{Introduction} \label{sec:intro}

The increasing size of data sets over the past decade has inspired work on both data compression and streaming algorithms.  In compression research, the advent of the Burrows-Wheeler Transform~\cite{BW94} (BWT) has led to great improvements in both theory and practice.  Streaming algorithms, meanwhile, are now used not only to process data online but also, because sequential access is so much faster than random access, to process them on disk more quickly.  To combine these advances, it seems we must find a way to compute something like the BWT in a streaming model, e.g.:
\begin{description}
\item[1. Standard:] In the simplest and most restrictive model, we are allowed only one pass over the input and memory sublinear (usually polylogarithmic) in the input's size (see, e.g.,~\cite{Mut05}).
\item[2. Multipass:] In one of the earliest papers on what is now called streaming, Munro and Paterson~\cite{MP80} proposed a model in which the input is stored on a one-way, read-only tape --- representing external memory --- that is completely rewound whenever we reach the end.
\item[3. W-Streams:] Ruhl~\cite{Ruh03} proposed a model in which we can also write: during each pass over the tape, we can replace its contents with something up to a constant factor larger.
%\pagebreak
\item[4. StreamSort:] Since we cannot sort in the W-Streams model with polylogarithmic memory and passes, Ruhl also proposed a generalization in which we can sort the contents of the tape at the cost of a constant number of passes (see also~\cite{ADRR04}).
\item[5. Read-Write:] Grohe and Schweikardt~\cite{GS05} noted that, with an additional tape, we can sort even in the W-Streams model; they proposed a model in which we have a read-write input tape, some number of read-write work tapes, and a write-only output tape.
\end{description}

In a previous paper~\cite{GM07b} we proved nearly tight bounds on how well we can compress in the Standard model with constant memory.  We also showed how LZ77~\cite{ZL77} can be implemented with a growing sliding window to use slightly sublinear memory without increasing the bound on its redundancy, but that this cannot be done for sublogarithmic memory.  Those bounds are, however, tangential to the common assumption of polylogarithmic memory.  In this paper we consider whether the following bound can be achieved in the models described above with polylogarithmic resources: Given a string $s$ of length $n$ over an alphabet of constant size $\sigma$, can we store $s$ in \(O (n H_k (s) + \sigma^k \log n)\) bits for all $k$ simultaneously, where \(H_k (s)\) is the $k$th-order empirical entropy of $s$?  We first prove the bound unachievable in the first three models, via a nearly tight tradeoff between memory and redundancy.  We then show we can compute the Schindler Transform~\cite{Sch97} (ST) in the StreamSort model and the BWT in the Read-Write model, so the bound is achievable in them.

We start by reviewing some preliminary material in Section~\ref{sec:prelims}.  In Section~\ref{sec:standard} we show how, for any constants $c$ and $\epsilon$ with \(1 \geq c \geq 0\) and \(\epsilon > 0\), we can store $s$ in \(n H_k (s) + O (\sigma^k n^{1 - c + \epsilon})\) bits in the Standard model with \(O (n^c)\) bits of memory; we then show this bound is nearly optimal in the sense that we cannot always store $s$ in, nor recover it from, \(O (n H_k (s) + \sigma^k n^{1 - c - \epsilon})\) bits.  In Section~\ref{sec:MP and WS} we extend our tradeoff to the Multipass and W-Streams models.  Since we can store \BWTs\ in, and recover it from, \(3.4 n H_k (s) + O (\sigma^k)\) bits in even the Standard model, our tradeoff implies we can neither compute nor invert the BWT in the first three models.  In Section~\ref{sec:StreamSort} we show how we can compute the ST in the StreamSort model and in Section~\ref{sec:ReadWrite} we show how we can compute the BWT in the Read-Write model.  Using either transform, we can store $s$ in \(1.8 n H_k (s) + O (\sigma^k \log n)\) bits in these models.

\section{Preliminaries} \label{sec:prelims}

Throughout this paper, we assume \(s = s_1 \cdots s_n\) is a string of length $n$ over an alphabet of constant size $\sigma$, and $c$ and $\epsilon$ are constants with \(1 - \epsilon > c > \epsilon > 0\).  The 0th-order empirical entropy \(H_0 (s)\) of $s$ is the entropy of the characters' distribution in $s$, i.e., \(H_0 (s) = \frac{1}{n} \sum_{a \in s} n_a \log \frac{n}{n_a}\) where \(a \in s\) means character $a$ occurs in $s$ and $n_a$ is its frequency.  The $k$th-order empirical entropy \(H_k (s)\) of $s$ for \(k \geq 1\), described in detail by Manzini~\cite{Man01}, is defined as
\[H_k (s) = \frac{1}{n} \sum_{|w| = k} |w_s| H_0 (w_s)\,,\]
where $w_s$ is the concatenation of characters that immediately follow occurrences of $w$ in $s$.

The BWT is an invertible transform that permutes the characters in $s$ so that $s_i$ comes before $s_j$ if \(s_{i - 1} \cdots s_{i - n}\) is lexicographically less than \(s_{j - 1} \cdots s_{j - n}\), taking indices modulo \(n + 1\) and considering \(s_0 = s_{n + 1}\) to be lexicographically less than any character in the alphabet.  In other words, the BWT sorts $s$'s characters in order of their contexts, which start at their predecessors and extend backwards.  Although the distribution of characters remains the same --- so \(H_0 (\BWTs) = H_0 (s)\) --- characters with similar contexts are moved close together so, if $s$ is compressible with an algorithm that takes uses contexts, then \BWTs\ is compressible with an algorithm that takes advantage of local homogeneity.  Move-to-front~\cite{BSTW86} and distance coding~\cite{Bin00,Deo02} are two reversible transforms that turn strings with local homogeneity into strings of numbers with low 0th-order empirical entropy: move-to-front keeps a list of the characters in the alphabet and replaces each character in the input by its position in the list and then moves it to the front of the list; distance coding writes the distance to the first occurrence of each character and the length of the string, then replaces each character in the input by the distance from the last occurrence of that character, omitting 1s.  (The numbers are often written with, e.g., Elias' delta code~\cite{Eli75}, in which case move-to-front and distance coding become compression algorithms themselves.)  Building on work by Kaplan, Landau and Verbin~\cite{KLV07}, we showed in another previous paper~\cite{GM07a} that composing the BWT, move-to-front, run-length coding and arithmetic coding produces an encoding that contains at most \(3.4 n H_k (s) + O (\sigma^k)\) bits for all $k$ simultaneously; with the BWT, distance coding and arithmetic coding, the bound is \(1.8 n H_k (s) + O (\sigma^k \log n)\) bits.  Using the BWT in more sophisticated ways, Ferragina, Manzini, M\"{a}kinen and Navarro~\cite{FMMN07} and M\"{a}kinen and Navarro~\cite{MN??} achieved a bound on the encoding's length of \(n H_k (s) + o (n / \log n)\) bits for all $k$ at most a constant proper fraction of \(\log_\sigma n\); Grossi, Gupta and Vitter~\cite{GGV??} achieved a bound of \(n H_k (s) + O (\sigma^k \log n)\) bits for all $k$ simultaneously, matching a lower bound due to Rissanen~\cite{Ris83}. In Section~\ref{sec:standard} we show how, because length times empirical entropy is superadditive --- i.e., \(|a| H_k (a) + |b| H_k (b) \leq |ab| H_k (ab)\) --- we can trade-off between memory and redundancy by breaking $s$ into blocks and encoding each block in turn with Grossi, Gupta and Vitter's algorithm: the memory needed decreases by a factor roughly equal to the number of blocks, while the redundancy increases by that much.  In Section~\ref{sec:StreamSort} we use the ST instead of the BWT.  When using contexts of length $k$, the ST permutes the characters of $s$ so that $s_i$ comes before $s_j$ if \(s_{i - 1} \cdots s_{i - k}\) is lexicographically less than \(s_{j - 1} \cdots s_{j - k}\) or, when they are equal, if \(i < j\).  By the same arguments as for the BWT, composing the ST, distance coding and arithmetic coding produces an encoding the contains at most \(1.8 n H_k (s) + O (\sigma^k \log n)\) bits for the chosen context length $k$.

A $\sigma$-ary De Bruijn sequence of order \(k \geq 1\) contains each possible $k$-tuple exactly once and, it follows, has length \(\sigma^k + k - 1\) and starts and ends with the same \(k - 1\) characters.  Notice that, if $d$ consists of the first $\sigma^k$ characters of such a sequence, then \(H_k (d^i) = 0\) for any $i$.  However, there are \((\sigma!)^{\sigma^{k - 1}}\) such sequences~\cite{DeB46,Knu97,Knu05} so, by Li and Vit\'{a}nyi's Incompressibility Lemma~\cite{LV97}, for any fixed algorithm $A$, the maximum Kolmogorov complexity of $d$ relative to $A$ is at least \(\lfloor \sigma^{k - 1} \log \sigma! \rfloor = \Omega (\sigma^k)\) bits.  We used this fact in~\cite{GM07b} to prove that a one-pass algorithm cannot always compress well in terms of the $k$th-order empirical without using memory exponential in $k$: we can compute $d$ from the configurations of $A$'s memory when it starts and finishes reading any copy of $d$ in $d^i$ and its output while reading that copy (if there were another string $d'$ that took $A$ between those configurations while producing that output, then we could substitute $d'$ for that copy of $d$ without changing the total encoding); therefore, if $A$ uses \(o (\sigma^k)\) bits of memory, then its total output must be \(\Omega (|d^i|)\) bits.

\section{The Standard model} \label{sec:standard}

We can easily store $s$ in \(n H_0 (s) + 2 n + O (1)\) bits in one pass with \(O (\log n)\) bits of memory with dynamic Huffman coding~\cite{Vit87}, for example; by running a separate copy of the algorithm for each possible $k$-tuple for any given $k$, we can store $s$ in \(n H_k (s) + 2 n + O (\sigma^k)\) bits in one pass with \(O (\sigma^k \log n)\) bits of memory.  With adaptive arithmetic coding instead of dynamic Huffman coding, the \(2 n\) term becomes approximately \(n / 100\) (see, e.g.,~\cite{HV92}).  With the version of LZ77 we described in~\cite{GM07b}, for any fixed $k$ we can store $s$ in \(n H_k (s) + o (n)\) bits in one pass with \(o (n)\) bits of memory, but with both \(o (n)\) terms nearly linear.  We now show we can substantially reduce those terms simultaneously, answering a question we posed in that paper.

\begin{theorem} \label{thm:Std Enc UB}
In the Standard model with \(O (n^c)\) bits of memory, we can store $s$ in, and later recover it from, \(n H_k (s) + O (\sigma^k n^{1 - c + \epsilon})\) bits for all $k$ simultaneously.
\end{theorem}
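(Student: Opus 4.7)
The plan is to use the blocking strategy hinted at in the preliminaries: we partition $s$ into consecutive blocks, encode each one independently with Grossi, Gupta and Vitter's algorithm, and then exploit the superadditivity of $|x| H_k(x)$ to sum up the per-block bounds without losing anything in the leading term.

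Concretely, I would set the block length to $b = \lceil n^c \rceil$ and split $s$ into $s^{(1)}, \ldots, s^{(t)}$ with $t = \lceil n / b \rceil = O(n^{1-c})$. The algorithm reads $b$ characters at a time into its $O(n^c)$-bit buffer, runs GGV's encoder on the buffered block to output at most $|s^{(i)}| H_k(s^{(i)}) + O(\sigma^k \log b)$ bits prefixed by the block's encoded length, then empties the buffer and proceeds. Decoding mirrors this: each block's self-delimiting encoding is read, fed to GGV's decoder in the same $O(n^c)$-bit workspace, emitted to the output, and discarded. The reason this fits in the Standard model is that GGV's encoder and decoder are internal procedures on a string of length $b$, so they require only memory polynomial in $b$ --- in fact linear suffices --- which matches our budget.

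To add up the per-block bounds, I would use the superadditivity inequality $\sum_{i=1}^{t} |s^{(i)}| H_k(s^{(i)}) \leq n H_k(s)$ stated in the preliminaries; the redundancy contributions then sum to $O(t \cdot \sigma^k \log b) = O(\sigma^k n^{1-c} \log n)$. Since $\log n = O(n^\epsilon)$ for any fixed $\epsilon > 0$, this is absorbed into $O(\sigma^k n^{1-c+\epsilon})$, giving the theorem. A key point is that GGV's guarantee holds for all $k$ simultaneously on each block, so the summed bound is also for all $k$ simultaneously.

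The main subtlety --- and the step I would be most careful about --- is the bookkeeping for the ``for all $k$ simultaneously'' clause together with the two regimes of $k$. When $\sigma^k$ is very large relative to $b$ the per-block redundancy $O(\sigma^k \log b)$ may dominate that block's contribution, but summing across the $t$ blocks still lands inside the target $O(\sigma^k n^{1-c+\epsilon})$, so no separate argument is needed; I would make this explicit to rule out a hidden dependence on $k$ in the constants. Beyond that, handling the last (possibly short) block and recording a constant-size header of parameters are routine, so the blocking-plus-superadditivity argument gives the claim cleanly.
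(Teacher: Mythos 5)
Your approach is the same as the paper's --- block, encode each block with Grossi--Gupta--Vitter, and invoke superadditivity --- but there are two genuine gaps in the details, both of which the paper handles explicitly and both of which matter for the stated memory budget.

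First, you assert that GGV's encoder and decoder ``require only memory polynomial in $b$ --- in fact linear suffices,'' and on that basis you pick block length $b = \lceil n^c \rceil$. But GGV's paper does not state a bit-level memory bound; the only thing one can safely infer from their $O(n)$-time guarantee is that the working memory is at most time times word size, i.e.\ $O(b \log b)$ bits. Under that conservative bound, your block size gives $O(n^c \log n)$ bits of memory, which exceeds the $O(n^c)$ budget by a $\log n$ factor. The fix is to shrink the blocks slightly, e.g.\ to length $O(n^{c - \epsilon/2})$: then the memory is $O(n^{c-\epsilon/2}\log n) = O(n^c)$, and since the number of blocks grows only to $O(n^{1-c+\epsilon/2})$, the summed redundancy $O(\sigma^k n^{1-c+\epsilon/2}\log n)$ still sits inside $O(\sigma^k n^{1-c+\epsilon})$. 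You cannot simply declare linear memory for GGV without a citation or a re-derivation.

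Second, your choice $b = \lceil n^c \rceil$ presupposes knowledge of $n$. In the Standard streaming model, $n$ is generally not known in advance, and since the block length is the lever that controls memory you must say how it is set without seeing the end of the stream. The standard remedy --- start with a constant estimate of $n$, double it whenever it is reached, and re-derive the block length --- at most doubles the size of the largest block and multiplies the number of blocks by $O(\log n)$, which is again absorbed into the $n^{\epsilon}$ slack. Without some such argument your algorithm is only defined when $n$ is given as a side input, which is strictly weaker than the claim. Everything else in your write-up (superadditivity, the self-delimiting prefixes, the observation that the per-block bound holds for all $k$ simultaneously, the absorption of $\log n$ into $n^{\epsilon}$) is correct and matches the paper.
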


\begin{proof}
Let $A$ be Grossi, Gupta and Vitter's algorithm.  They proved $A$ stores $s$ in \(n H_k (s) + O (\sigma^k \log n)\) bits for all $k$ simultaneously using \(O (n)\) time so, although they did not give an explicit bound on the memory used, we know it is at most $A$'s time complexity multiplied by the word size, i.e., \(O (n \log n)\) bits.

First, suppose we know $n$ in advance.  We process $s$ in \(O (n^{1 - c + \epsilon / 2})\) blocks \(b_1, \ldots, b_m\), each of length \(O (n^{c - \epsilon / 2})\): we read each block $b_i$ in turn, compute and output \(A (b_i)\) --- using \(O (|b_i| \log |b_i|) = O (n^c)\) bits of memory --- and erase $b_i$ from memory.  As we noted in Section~\ref{sec:prelims}, empirical entropy is superadditive, so the total length of the encodings we output is at most
\[\sum_{i = 1}^m \left( \rule{0ex}{2ex} |b_i| H_k (b_i) + O (\sigma^k \log |b_i|) \right)
\leq n H_k (s) + O (\sigma^k n^{1 - c + \epsilon})\]
bits for all $k$ simultaneously.

Now, suppose we do not know $n$ in advance.  We work as before but we start with a constant estimate of $n$ and, each time we have read that many characters of $s$, we double it.  This way, we increase the size of the largest block by less than 2 and the number of blocks by an \(O (\log n)\)-factor, so our asymptotic bounds on the memory used and the whole encoding's length does not change. \qed
\end{proof}

Extending our lower bounds from that paper, we now show we cannot reduce the factor $n^{1 - c + \epsilon}$ much further unless we increase the factor $\sigma^k$, not even if we multiply the bound on the encoding's length by any constant coefficient.  This lower bound holds for both compression and decompression; as we noted in~\cite{GM07b}, ``good bounds [for decompression] are equally important, because often data is compressed once by a powerful machine (e.g., a server or base-station) and then transmitted to many weaker machines (clients or agents) who decompress it individually.''

\begin{theorem} \label{thm:Std Enc LB}
In the Standard model with \(O (n^c)\) bits of memory, we cannot store $s$ in \(O (n H_k (s) + \sigma^k n^{1 - c - \epsilon})\) bits in the worst case for, e.g., \(k = \lceil (c + \epsilon / 2)\) \(\log_\sigma n \rceil\).
\end{theorem}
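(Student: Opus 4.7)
The plan is to adapt the De Bruijn repetition argument sketched at the end of Section~\ref{sec:prelims} to the given memory regime. I would set $k = \lceil (c + \epsilon/2) \log_\sigma n \rceil$, so that $\sigma^k = \Theta(n^{c + \epsilon/2})$. Let $d$ be the first $\sigma^k$ characters of a De Bruijn sequence of order $k$ chosen to have maximal Kolmogorov complexity; by the Incompressibility Lemma any description of $d$ must then use $\Omega(\sigma^k) = \Omega(n^{c + \epsilon/2})$ bits. Take $s = d^i$ with $i = \lfloor n / \sigma^k \rfloor = \Theta(n^{1 - c - \epsilon/2})$, so $|s| = \Theta(n)$ and, because $d$ is a De Bruijn prefix, $H_k(s) = 0$.

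Next I would assume, for contradiction, that an algorithm $A$ running in the Standard model with $O(n^c)$ bits of memory stores $s$ in $O(n H_k(s) + \sigma^k n^{1 - c - \epsilon}) = O(n^{1 - \epsilon/2})$ bits in a decodable way. Averaging over the $i$ copies of $d$ embedded in $s$, there is at least one copy while reading which $A$ emits at most $O(n^{1 - \epsilon/2} / i) = O(n^c)$ bits. I would then describe $d$ by packaging together: the index of this copy ($O(\log n)$ bits), the snapshot of $A$'s memory just before and just after reading it ($O(n^c)$ bits each), and the output slice emitted while reading it ($O(n^c)$ bits). Uniqueness of $d$ given this package comes from the substitution argument already recalled in Section~\ref{sec:prelims}: any $d' \ne d$ consistent with the same start state, end state and emitted output could replace this copy of $d$ in $s$ without changing $A$'s total output, and the decoder would be unable to distinguish $d^i$ from the altered string. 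The total description length is $O(n^c) = o(n^{c + \epsilon/2}) = o(\sigma^k)$, contradicting the incompressibility of $d$.

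For the decompression side I would run the symmetric argument: a decoder with $O(n^c)$ memory that recovers $s$ from a shorter encoding must, while emitting at least one copy of $d$, consume only $O(n^c)$ input bits on average, so its start state, end state and the corresponding slice of encoding once again describe $d$ in $o(\sigma^k)$ bits. The main obstacle I expect is mostly bookkeeping: the additive contributions (copy index, two memory configurations, output slice, plus the constant-size description of $A$) all need to be bounded so that their total stays strictly below $\sigma^k$, which relies essentially on the slack $\epsilon/2$ separating $n^c$ from $\sigma^k = \Theta(n^{c+\epsilon/2})$. A secondary subtlety is handling the case when $n$ is not an exact multiple of $\sigma^k$, but since $c + \epsilon < 1$ we have $\sigma^k \le \sigma^k n^{1 - c - \epsilon}$ and the residual padding is absorbed harmlessly into the target bound.
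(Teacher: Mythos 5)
Your argument is correct and follows essentially the same route as the paper: same choice of $k$, same De Bruijn-prefix string $s=d^i$ with $H_k(s)=0$, and the same reconstruction of $d$ from two memory configurations plus the output slice. The only cosmetic difference is that the paper argues directly that $A$ must emit $\Omega(\sigma^k)$ bits on \emph{every} copy of $d$ (hence $\Omega(n)$ total), whereas you reach the same contradiction by averaging over copies to locate one with small output.
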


\begin{proof}
Consider any compression algorithm $A$ that works in the Standard model with \(O (n^c)\) bits of memory.  Suppose $s$ consists of copies of the first $\sigma^k$ characters $d$ in a $\sigma$-ary De Bruijn sequence of order \(k = \lceil (c + \epsilon / 2) \log_\sigma n \rceil\) whose Kolmogorov complexity relative to $A$ is \(\Omega (\sigma^k) = \Omega (n^{c + \epsilon / 2})\) bits.  As we noted in Section~\ref{sec:prelims}, we can compute $d$ from the configurations of $A$'s memory when it starts and finishes reading any copy of $d$ and its output while reading that copy.  Since $A$'s memory is asymptotically smaller than $d$'s Kolmogorov complexity relative to $A$, it must output \(\Omega (\sigma^k)\) bits for every copy of $d$ in $s$, i.e., \(\Omega (n)\) bits altogether.  Since \(H_k (s) = 0\), however, \(O (n H_k (s) + \sigma^k n^{1 - c - \epsilon}) = O (n^{1 - \epsilon / 2})\). \qed
\end{proof}

\begin{theorem} \label{thm:Std Dec LB}
In the Standard model with \(O (n^c)\) bits of memory, we cannot recover $s$ from \(O (n H_k (s) + \sigma^k n^{1 - c - \epsilon})\) bits in the worst case for, e.g., \(k = \lceil (c + \epsilon / 2)\) \linebreak \(\log_\sigma n \rceil\).
\end{theorem}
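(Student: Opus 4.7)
The plan is to mirror the argument of Theorem~\ref{thm:Std Enc LB}, applying the incompressibility argument to a decompressor rather than a compressor. Let $A$ be any algorithm that uses $O(n^c)$ bits of memory in the Standard model and outputs $s$ after reading an encoding of it. I would again take $s = d^i$, where $d$ is the first $\sigma^k$ characters of a $\sigma$-ary De Bruijn sequence of order $k = \lceil (c + \epsilon/2) \log_\sigma n \rceil$ whose Kolmogorov complexity relative to $A$ is $\Omega(\sigma^k) = \Omega(n^{c+\epsilon/2})$. Since $H_k(s) = 0$, the hypothesized encoding would have only $O(\sigma^k n^{1-c-\epsilon}) = O(n^{1-\epsilon/2})$ bits, and $s$ is split into $i = n/\sigma^k = \Theta(n^{1-c-\epsilon/2})$ contiguous copies of $d$.

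Next, I would show that from $A$'s execution trace, any single copy of $d$ admits a short description relative to $A$. For the $j$th copy, record the memory configuration $m_j$ of $A$ at the moment it has just written $(j-1)|d|$ output characters, together with the position of its read head on the encoding tape at that moment, plus the substring of the encoding consumed while the $j$th block of $|d|$ output characters is produced. Since $A$ is deterministic and $|d| = \sigma^k$ is known, simulating $A$ from $m_j$ while feeding it those encoding bits reproduces $d$ exactly. The memory contributes $O(n^c)$ bits and the tape position $O(\log n)$ bits; by pigeonhole, at least one index $j^{*}$ consumes an encoding chunk bounded by the average $O(n^{1-\epsilon/2}/n^{1-c-\epsilon/2}) = O(n^c)$ bits.

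Combining these pieces yields a description of $d$ of length $O(n^c) = o(n^{c+\epsilon/2})$ bits relative to $A$, contradicting the lower bound on $d$'s Kolmogorov complexity. Hence no such $A$ exists and the claimed worst-case decoding length is impossible. Should we want a fully matching form with Theorem~\ref{thm:Std Enc LB}, a further constant blow-up in the redundancy is absorbed by the same $\epsilon/2$ slack already built into the choice of $k$.

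The main obstacle I expect is pinning down exactly what constitutes $A$'s ``memory configuration'' so that the simulation argument is watertight: the configuration must fully determine $A$'s future behavior given its future inputs, so it should include the read-head position and any internal program counter, and we must argue that the output boundary between consecutive copies of $d$ is well-defined without $A$ having to announce it --- which it is, since we may measure the output count externally in the simulation. Both points are standard and stay inside the $O(n^c) + O(\log n)$ budget, strictly parallel to the treatment in Theorem~\ref{thm:Std Enc LB}.
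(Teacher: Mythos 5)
Your proposal is correct and follows essentially the same route as the paper's proof: take $s = d^i$ for a hard De Bruijn block $d$, observe that each copy of $d$ can be reconstructed from $A$'s $O(n^c)$-bit memory configuration plus the encoding bits consumed while outputting that copy, and conclude that these consumed bits must be $\Omega(\sigma^k)$ per copy (you phrase this as the contrapositive via an averaging/pigeonhole step, the paper as a direct per-copy lower bound summed to $\Omega(n)$, but it is the same accounting). The extra $O(\log n)$ read-head position you include is harmless but unnecessary, since the simulator just feeds the recorded bits sequentially.
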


\begin{proof}
Let $A$ be a decompression algorithm that works in the Standard model with \(O (n^c)\) bits of memory and let $s$ be as in the proof of Theorem~\ref{thm:Std Enc LB}.  We can also compute $d$ from the configuration of $A$'s memory when it starts outputting any copy of $d$ and the bits it reads while outputting that copy; since $A$'s memory is asymptotically smaller than $d$'s Kolmogorov complexity relative to $A$, it must read \(\Omega (\sigma^k)\) bits for each copy of $d$ in $s$, i.e., \(\Omega (n)\) bits altogether, whereas \(O (n H_k (s) + \sigma^k n^{1 - c - \epsilon}) = O (n^{1 - \epsilon / 2})\). \qed
\end{proof}

Finally, we now prove that, if we could compute the BWT in the Standard model, then we could achieve the bounds we have just proven unachievable; we will draw the obvious conclusion in Section~\ref{sec:MP and WS} as part of a more general theorem.

\begin{lemma} \label{lem:MTF RL UB}
In the Standard model with \(O (\log n)\) bits of memory, we can store \BWTs\ in, and later recover it from, \(3.4 n H_k (s) + O (\sigma^k)\) bits for all $k$ simultaneously.
\end{lemma}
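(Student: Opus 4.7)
The plan is to reduce to the bound from our previous paper~\cite{GM07a}, which already guarantees that composing move-to-front, run-length coding and arithmetic coding on \BWTs\ yields an encoding of length at most $3.4 n H_k(s) + O(\sigma^k)$ bits for all $k$ simultaneously. Since the lemma takes \BWTs\ (not $s$) as input, there is nothing BWT-related to compute; we only have to argue that this three-stage composition, and its inverse, can each be run as an online algorithm using $O(\log n)$ bits of memory.

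For the encoder I would chain the three transforms so that each character of \BWTs\ is consumed once as it arrives: move-to-front maintains a list of the $\sigma$ alphabet characters, which is $O(\sigma \log \sigma) = O(1)$ bits since $\sigma$ is constant, and emits one rank per input character; run-length coding simply keeps the current symbol and an $O(\log n)$-bit counter, emitting a (symbol, length) pair whenever the run ends; and adaptive arithmetic coding maintains frequency counts over a constant-size symbol alphabet together with its interval registers, both representable in $O(\log n)$ bits using the standard rescaling implementation that prevents the interval from underflowing. Feeding the output of each stage directly into the next gives a one-pass encoder with the required memory budget, and the length bound is immediate from~\cite{GM07a}.

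For the decoder I would run the three inverses in the opposite order, again as a pipeline: arithmetic decoding, inverse run-length coding (which just repeats a symbol a stored number of times), and inverse move-to-front (which maintains the same $\sigma$-character list). Each stage is online and uses $O(\log n)$ bits for exactly the same reasons as in the encoder, so the whole decoder fits in the Standard model with $O(\log n)$ memory and reproduces \BWTs.

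The only subtle point is arithmetic coding: a naive implementation would let its interval shrink without bound and thus need more than $O(\log n)$ bits of state, so I would explicitly note that the standard renormalisation trick keeps the two interval endpoints representable in a logarithmic number of bits. Everything else is bookkeeping, and no non-trivial calculation beyond what was already done in~\cite{GM07a} is required.
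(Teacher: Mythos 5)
Your proposal is correct and follows essentially the same route as the paper: compose move-to-front, run-length coding and adaptive arithmetic coding (and their inverses, pipelined in the reverse order), observe that each stage is one-pass with $O(\log n)$ bits of state, and invoke the $3.4 n H_k(s) + O(\sigma^k)$ bound already established in~\cite{GM07a}. You simply spell out the per-stage memory accounting (including the arithmetic-coding renormalisation point) in more detail than the paper's brief proof does.
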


\begin{proof}
We encode \BWTs\ by composing move-to-front, run-length coding and adaptive arithmetic coding; since encoding or decoding each of the three steps takes one pass and \(O (\log n)\) bits of memory, so does their composition.  As we noted in Section~\ref{sec:prelims}, the resulting encoding contains at most \(3.4 n H_k (s) + O (\sigma^k)\) bits for all $k$ simultaneously. \qed
\end{proof}

\section{The Multipass and W-Streams models} \label{sec:MP and WS}

We now extend our tradeoff to the Multipass and W-Streams models.  Of course, anything we can do in the Standard model we can do in those models, so the upper bounds are immediate.

\begin{theorem} \label{thm:MP and WS Enc UB}
In the Multipass and W-Streams models with \(O (n^c)\) bits of memory and one pass, we can store $s$ in \(n H_k (s) + O (\sigma^k n^{1 - c + \epsilon})\) bits for all $k$ simultaneously.
\end{theorem}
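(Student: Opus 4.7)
The plan is to observe that this theorem is essentially an immediate corollary of Theorem~\ref{thm:Std Enc UB}. Both the Multipass and W-Streams models, when restricted to a single pass, contain the Standard model as a special case: a one-pass Multipass algorithm simply never rewinds, and a one-pass W-Streams algorithm never writes to the input tape. Any one-pass Standard-model compression scheme using $O(n^c)$ bits of memory therefore lifts verbatim to either of these models, with the compressed string emitted onto the output tape (or, in the W-Streams setting, written onto the input tape during the single pass, which a constant factor of slack in the tape capacity permits). So I would simply invoke Theorem~\ref{thm:Std Enc UB}: partition $s$ on the fly into blocks of length $O(n^{c - \epsilon/2})$, run Grossi, Gupta and Vitter's algorithm on each block within the $O(n^c)$ memory budget, and flush the block's encoding before reading the next one. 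Superadditivity of $|s| H_k(s)$ then bounds the total output length by $n H_k(s) + O(\sigma^k n^{1 - c + \epsilon})$ bits, uniformly in $k$, and the doubling idea from the proof of Theorem~\ref{thm:Std Enc UB} handles the case in which $n$ is not known in advance.

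There is no genuine obstacle here; the entire content of the argument is that extra passes and extra write capability can only help, so the Standard-model upper bound transfers for free. The real work in this section will lie on the lower-bound side, where one has to show that the additional power of rewinding or writing cannot be leveraged to improve on the $\sigma^k n^{1 - c - \epsilon}$ barrier — an argument that presumably requires extending the De~Bruijn / Kolmogorov-complexity technique of Theorems~\ref{thm:Std Enc LB} and~\ref{thm:Std Dec LB} across multiple passes, rather than anything new for the upper bound stated here.
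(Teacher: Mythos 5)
Your proposal is correct and matches the paper's argument exactly: the paper gives no separate proof for this theorem, dismissing it with the one-line remark that anything doable in the Standard model is doable in the Multipass and W-Streams models, so the upper bound of Theorem~\ref{thm:Std Enc UB} transfers immediately. Your elaboration of why a single-pass Multipass or W-Streams algorithm subsumes a Standard-model one, and your restatement of the blocking-plus-superadditivity argument, are both consistent with the paper's intent.
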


Conversely, lower bounds for the W-Streams model apply to the Multipass model.  We could quite easily extend our proofs to include the Multipass model alone: e.g., to see we cannot compress $s$ very well with polylogarithmic passes, notice we can compute $d$ from the configurations of $A$'s memory when it starts and finishes reading any copy of $d$ and its output while reading that copy \emph{during each pass}; since \(\log^{O (1)} n = o (n^{\epsilon / 2})\), even a polylogarithmic number of $A$'s memory configurations are asymptotically smaller than $d$'s Kolmogorov complexity relative to $A$, so the rest of our argument still holds.  The proof for the W-Streams model must be slightly different because, for example, after the first pass the tape will generally not contain copies of $d$.

\begin{theorem} \label{thm:MP and WS Enc LB}
In the Multipass and W-Streams models with \(O (n^c)\) bits of memory and \(\log^{O (1)} n\) passes, we cannot store $s$ in \(O (n H_k (s) + \sigma^k n^{1 - c - \epsilon})\) bits in the worst case for, e.g., \(k = \lceil (c + \epsilon / 2)\) \(\log_\sigma n \rceil\).
\end{theorem}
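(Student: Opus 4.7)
The plan is to reuse the adversarial $s = d^{n/|d|}$ from the proof of Theorem~\ref{thm:Std Enc LB}, with $d$ chosen (as the first $\sigma^k$ characters of a De Bruijn sequence) to have Kolmogorov complexity $\Omega(\sigma^k)$ relative to $A$. Since $H_k(s) = 0$, the bound to contradict collapses to $O(n^{1 - \epsilon/2})$, so it suffices to force $\omega(n^{1-\epsilon/2})$ bits out of $A$.

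To handle the issue flagged after Theorem~\ref{thm:MP and WS Enc UB} --- after pass~1 the tape no longer contains literal copies of $d$ --- I follow each original copy through the evolving tapes. Every pass reads its tape strictly left-to-right and writes its output in order, so I inductively define the \emph{influence region} $R_i^{(j)}$ of copy $i$ in tape $T_j$: $R_i^{(0)}$ is positions $(i-1)|d|+1, \ldots, i|d|$ of $s$, and $R_i^{(j)}$ is whatever pass $j$ writes to $T_j$ while scanning $R_i^{(j-1)}$. The memory configurations $M_{i-1}^{(j)}$, $M_i^{(j)}$ at the endpoints of each $R_i^{(j)}$, together with the written strings $R_i^{(1)}, \ldots, R_i^{(p)}$, form a \emph{trace} of copy $i$ through the full computation --- the W-Streams analogue of the per-copy triple used in the Standard and Multipass lower bounds.

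The substitution argument then follows Theorem~\ref{thm:Std Enc LB}: if some $d' \neq d$ produced the same pass-1 triple $(M_{i-1}^{(1)}, M_i^{(1)}, R_i^{(1)})$ when read from state $M_{i-1}^{(1)}$, then substituting $d'$ for the $i$-th copy of $d$ in $s$ yields $s' \neq s$ with the same $T_1$, hence the same $T_2, \ldots, T_p$ and the same compressed output, contradicting invertibility. So copy $i$'s trace determines $d$ and has description length at least $K(d \mid A) = \Omega(\sigma^k)$. The $2p$ memory snapshots contribute only $O(p \cdot n^c) = o(\sigma^k)$ bits because $\log^{O(1)} n = o(n^{\epsilon/2})$ (the same observation the paper makes for Multipass), so the writes satisfy $\sum_{j=1}^{p} |R_i^{(j)}| = \Omega(\sigma^k)$ per copy. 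Summing over the $n/|d|$ copies gives $\sum_j |T_j| = \Omega(n)$ bits written by $A$ altogether, which contradicts the assumed $O(n^{1-\epsilon/2})$ bound on what the algorithm stores.

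The main obstacle is precisely the one the paper calls out: the literal copies of $d$ disappear after pass~1, so the direct per-pass substitution used for Multipass has nothing to hold onto in later passes. Recursively tracking the influence regions $R_i^{(j)}$ pushes the substitution back to pass~1 while letting the write-cost accounting aggregate over all $p$ passes, and the subtle point --- to be made precise in the proof --- is identifying $\sum_j |T_j|$ with the write bill that the W-Streams model charges against the final stored encoding, so that $\sum_j |T_j| = \Omega(n)$ really does contradict $|C| = O(n^{1-\epsilon/2})$. The Multipass bound drops out as a corollary because a Multipass algorithm is just a W-Streams algorithm that never alters its input tape.
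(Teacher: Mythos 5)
Your construction of the influence regions $R_i^{(j)}$ and the observation that the pass-one substitution argument carries over are both correct, but the accounting at the end has a genuine gap --- the very one you flag, and it is not merely a bookkeeping detail that can be ``made precise.'' In the W-Streams model only the \emph{final} tape $T_p$ is the stored encoding; the intermediate tapes $T_1, \ldots, T_{p-1}$ are scratch and are not charged against the output length. Your argument yields at best $\sum_j |T_j| = \Omega(n)$ (or, from the pass-one substitution, $|T_1| = \Omega(n)$), and neither bounds $|T_p|$. In fact $\sum_j |T_j| \geq |T_0| = n$ holds trivially for any algorithm that scans its input, and $|T_1| = \Theta(n)$ is perfectly compatible with the output being a single bit, so no contradiction with $|T_p| = O(n^{1 - \epsilon/2})$ follows.

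The paper's proof closes exactly this hole by running the substitution \emph{backwards} and using only the final region. For each copy $d_0$ of $d$ it defines $d_i$ to be what $A$ writes while reading $d_{i-1}$, just as you do, but then observes that $d$ is computable from $d_p$ alone together with the $2p$ memory snapshots: if some other chain $d_0' \neq d_0, d_1', \ldots, d_{p-1}', d_p' = d_p$ were consistent with the same memory configurations, substituting $d_0'$ for $d_0$ would leave the entire final tape unchanged, contradicting invertibility. Thus the description of $d$ costs $|d_p| + O(p \cdot n^c)$ bits, the memory term is $o(\sigma^k)$ as you note, and therefore $|d_p| = \Omega(\sigma^k)$ for \emph{every} copy, giving $|T_p| = \Omega(n)$ --- a bound on the actual output. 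To repair your proof you would need to replace the forward ``pass-one triple determines $d$'' step (which localizes the large region on $T_1$) by this backward uniqueness-of-chain argument (which localizes it on $T_p$), at which point you would essentially have reproduced the paper's proof.
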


\begin{proof}
We need consider only the more general W-Streams model.  Consider any compression algorithm $A$ that works in the W-Streams model with \(O (n^c)\) bits of memory and \(\log^{O (1)} n\) passes, and let $s$ be as in the proof of Theorem~\ref{thm:Std Enc LB}.  Without loss of generality, assume $A$'s output consists of the contents of the tape after its last pass (any output from intermediate passes can be written on the tape instead).  Notice any substring of characters on the tape immediately after a particular pass must have been written (or left untouched) while $A$ was reading a substring of characters on the tape immediately before that pass.  Suppose $A$ makes $p$ passes over $s$.  Consider any copy $d_0$ of $d$ in $s$ and, for \(p \geq i \geq 1\), let $d_i$ be the substring $A$ writes while reading $d_{i - 1}$.  We claim we can compute $d$ from $d_p$ and the memory configurations of $A$ when it starts and finishes reading each $d_i$.  To see why, suppose there were a sequence \(d_0' \neq d_0, d_1', \ldots, d_{p - 1}', d_p' = d_p\) such that, for \(p \geq i \geq 1\), $d_{i - 1}'$ took $A$ between the $i$th pair of memory configurations while writing $d_i'$; then we could substitute $d_0'$ for $d_0$ without changing the total encoding.  It follows that $d_p$ must contain \(\Omega (\sigma^k)\) bits and, so, the whole tape must contain \(\Omega (n)\) bits after the last pass, whereas \(O (n H_k (s) + \sigma^k n^{1 - c - \epsilon}) = O (n^{1 - \epsilon / 2})\). \qed
\end{proof}

\begin{theorem} \label{thm:MP and WS Dec LB}
In the Multipass and W-Streams models with \(O (n^c)\) bits of memory and \(\log^{O (1)} n\) passes, we cannot recover $s$ from \(O (n H_k (s) + \sigma^k n^{1 - c - \epsilon})\) bits in the worst case for, e.g., \(k = \lceil (c + \epsilon / 2)\) \(\log_\sigma n \rceil\).
\end{theorem}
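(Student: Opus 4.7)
The plan is to adapt the proof of Theorem~\ref{thm:MP and WS Enc LB} by tracing the passes \emph{backward} from the output to the input, rather than forward from the input to the output. As there, it suffices to handle the more general W-Streams model. I would take any decompression algorithm $A$ that uses $O(n^c)$ bits of memory and $p = \log^{O(1)} n$ passes, and reuse the worst-case string $s$ from the proof of Theorem~\ref{thm:Std Enc LB}: $s$ consists of $n/\sigma^k$ copies of the De Bruijn prefix $d$ with $k = \lceil (c + \epsilon/2) \log_\sigma n \rceil$, so that $H_k (s) = 0$ and $d$ has Kolmogorov complexity $\Omega (\sigma^k) = \Omega (n^{c + \epsilon / 2})$ relative to $A$. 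As before, we may assume without loss of generality that the output is exactly the tape contents after the final pass.

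Let $c_i$ denote the tape contents after pass $i$, so $c_0$ is the compressed input and $c_p = s$. For any fixed copy $d_p$ of $d$ in $s$, I would inductively define $d_{i - 1}$, for $i = p, p - 1, \ldots, 1$, to be the substring of $c_{i - 1}$ that $A$ reads during pass $i$ while writing $d_i$. Because $A$ scans each tape linearly, disjoint intervals of written characters come from disjoint intervals of read characters; therefore, one pass at a time, the disjointness of the $n / \sigma^k$ copies of $d$ in $s$ propagates backward, yielding pairwise disjoint substrings $d_0 \subseteq c_0$, one per copy.

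The central claim is that $d = d_p$ is determined by $d_0$ together with the $2p$ memory configurations of $A$ at the start and end of the intervals during which it reads $d_0, d_1, \ldots, d_{p - 1}$: given such a start configuration and the substring $d_{i - 1}$, deterministic simulation of $A$ reproduces $d_i$, and iterating reaches $d_p$. These $2p$ configurations occupy only $O (p \cdot n^c) = n^c \log^{O (1)} n = o (\sigma^k)$ bits, so by incompressibility each $d_0$ must contain $\Omega (\sigma^k)$ bits. Summing over the $n^{1 - c - \epsilon / 2}$ copies of $d$, the compressed input has length $\Omega (n)$ bits, whereas the hypothesis $O (n H_k (s) + \sigma^k n^{1 - c - \epsilon}) = O (n^{1 - \epsilon / 2})$ bounds it by much less---the contradiction. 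The main obstacle I anticipate is the backward induction through $p$ passes: one must verify that each $d_{i - 1}$ is a well-defined contiguous substring of $c_{i - 1}$ and that disjointness genuinely propagates all the way back to $c_0$, both of which rely on the linearity of every pass, not merely the last.
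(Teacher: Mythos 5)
Your proof is correct and takes essentially the same approach as the paper's: trace each copy of $d$ backward through the passes from the output tape to the compressed input, argue via Kolmogorov complexity that the traced-back substring on the input tape must have $\Omega(\sigma^k)$ bits, and sum over the (essentially) disjoint copies to force $\Omega(n)$ input bits. The differences are cosmetic---the paper indexes with $d_0 = d$ on the output and walks up to $d_p$ on the input, uses $p$ rather than $2p$ memory snapshots, and leaves the disjointness of the pulled-back intervals implicit where you spell it out.
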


\begin{proof}
Again, we consider only the W-Streams model.  Consider any decompression algorithm $A$ that works in the W-Streams model with \(O (n^c)\) bits of memory and \(\log^{O (1)} n\) passes, and let $s$ be as in the proof of Theorem~\ref{thm:Std Enc LB}.  Without loss of generality, assume the tape contains $s$ after $A$'s last pass.  Consider any copy $d_0$ of $d$ in $s$ and, for \(p \geq i \geq 1\), let $d_i$ be the substring $A$ reads while writing $d_{i - 1}$; notice this is the reverse of the definition in the proof of Theorem~\ref{thm:MP and WS Enc LB}.  Notice we can compute $d$ from $d_p$ and the memory configuration of $A$ when it starts reading each $d_i$: running $A$ on $d_i$, starting in the memory configuration, produces $d_{i - 1}$.  It follows that $A$ must read \(\Omega (\sigma^k)\) bits for each copy of $d$ in $s$, i.e., \(\Omega (n)\) bits altogether, whereas \(O (n H_k (s) + \sigma^k n^{1 - c - \epsilon}) = O (n^{1 - \epsilon / 2})\). \qed
\end{proof}

We note in passing that in the W-Streams model, we can easily reduce sorting the characters in $s$ to computing the BWT: we compute \(s' = (s_1, s_0) (s_2, s_1) \cdots\) \((s_n, s_{n - 1}) (s_0, s_n)\); we compute \(\mathrm{BWT} (s') = (s_1, s_0) (s_{i_1 + 1}, s_{i_1}) \cdots (s_{i_n + 1}, s_{i_n})\); and we output \(s_{i_1}, \ldots, s_{i_n}\).  To see why \(s_{i_1}, \ldots, s_{i_n}\) are sorted, suppose \((s_{i + 1}, s_i)\) precedes \((s_{j + 1}, s_j)\) in \(\mathrm{BWT} (s')\).  The BWT arranges the pairs in $s'$ in the lexicographic order of their predecessors (how it breaks ties does not concern us now), which is that of their predecessors' first components, or that of their own second components --- so \(s_i \leq s_j\).  If $\sigma$ were unbounded, this reduction would imply we could not compute the BWT in the first three models; since $\sigma$ is constant, however, it is meaningless --- we can sort the characters in $s$ in the Multipass model anyway.  Fortunately, we can also easily reduce storing $s$ in \(O (n H_k (s) + \sigma^k \log n)\) bits to computing the BWT.

\begin{theorem} \label{cor:BWT LB}
In the Standard, Multipass and W-Streams models with \(\log^{O (1)} n\) bits of memory and passes, we can neither compute nor invert \BWTs\ in the worst case.
\end{theorem}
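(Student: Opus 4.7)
The plan is to derive this theorem as an immediate consequence of Lemma~\ref{lem:MTF RL UB} combined with Theorems~\ref{thm:Std Enc LB}, \ref{thm:Std Dec LB}, \ref{thm:MP and WS Enc LB} and \ref{thm:MP and WS Dec LB}, by a simple composition argument.

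For the ``compute'' direction, I would argue by contradiction. Suppose some algorithm $B$ computes \BWTs\ in one of the three models using $\log^{O(1)} n$ bits of memory and passes. Lemma~\ref{lem:MTF RL UB} supplies a one-pass, $O (\log n)$-memory encoder $C$ that maps any input $x$ to an encoding of length at most $3.4\,|x|\,H_k (x) + O (\sigma^k)$ bits for all $k$ simultaneously. Piping $B$'s output into $C$ then yields, within the same model and still using polylogarithmic memory and passes, a compressor that stores $s$ in $3.4\,n H_k (s) + O (\sigma^k)$ bits. Since polylogarithmic memory is $O (n^c)$ for every $c > 0$ and $O (\sigma^k) \subseteq O (\sigma^k n^{1 - c - \epsilon})$, this output length lies within the forbidden bound $O (n H_k (s) + \sigma^k n^{1 - c - \epsilon})$ at the specific $k = \lceil (c + \epsilon / 2) \log_\sigma n \rceil$ used by the lower bounds, contradicting Theorem~\ref{thm:Std Enc LB} in the Standard model and Theorem~\ref{thm:MP and WS Enc LB} in the other two.

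The ``invert'' direction is dual. I would assume a polylogarithmic-resource algorithm $A$ that inverts \BWTs\ and prepend to it the decoding counterpart from Lemma~\ref{lem:MTF RL UB}. The composition recovers $s$ from $3.4\,n H_k (s) + O (\sigma^k) = O (n H_k (s) + \sigma^k n^{1 - c - \epsilon})$ bits within the same model and polylogarithmic memory and passes, contradicting Theorem~\ref{thm:Std Dec LB} or Theorem~\ref{thm:MP and WS Dec LB}.

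The only genuinely non-mechanical point — and what I expect to be the main thing to justify — is that the two-stage composition really stays inside each model with polylogarithmic resources. In the Multipass and W-Streams models this is routine, since we concatenate the two algorithms' passes (using the work tape to hold the intermediate \BWTs, whose length is $n$). In the Standard model, where only a single pass is allowed, the argument still works because Lemma~\ref{lem:MTF RL UB}'s encoder processes its input left-to-right, so $C$ can consume $B$'s output character by character as it is produced; the inversion direction is handled symmetrically.
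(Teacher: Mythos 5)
Your proposal is correct and follows exactly the paper's own argument: the paper's proof of this theorem is a one-sentence appeal to Lemma~\ref{lem:MTF RL UB} together with Theorems~\ref{thm:Std Enc LB}, \ref{thm:Std Dec LB}, \ref{thm:MP and WS Enc LB} and \ref{thm:MP and WS Dec LB}, which is precisely the composition argument you spell out. One small correction to your final paragraph: the Multipass model has no writable work tape (only a one-way read-only input tape and a write-only output), so there you cannot literally ``hold the intermediate \BWTs\ on a work tape''; instead, either pipe the hypothetical BWT algorithm's output directly into the one-pass encoder of Lemma~\ref{lem:MTF RL UB} (re-running the decoder on each of the inverter's passes for the inversion direction), or simply observe that any Multipass algorithm is also a W-Streams algorithm and invoke the W-Streams case, which does have a writable tape.
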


\begin{proof}
If we could compute or invert \BWTs\ then, by Lemma~\ref{lem:MTF RL UB}, we could store $s$ in, or recover it from, \(3.4 n H_k (s) + O (\sigma^k)\) bits for all $k$ simultaneously; however, by Theorems~\ref{thm:Std Enc LB},~\ref{thm:Std Dec LB},~\ref{thm:MP and WS Enc LB} and~\ref{thm:MP and WS Dec LB} we cannot achieve this bound in the worst case. \qed
\end{proof}

\section{The StreamSort model} \label{sec:StreamSort}

The ST is known as both the ``Schindler Transform'' and the ``Sort Transform'', so it is perhaps not surprising that it can be computed in the StreamSort model.  Indeed, computing the ST for any given \(k = O (\log n)\) takes only a constant number of passes once we have padded the input from \(O (n)\) bits to \(\Omega (n \log n)\) bits; we do this padding, which takes \(O (\log \log n)\) passes, because we are allowed to expand the tape contents by a only constant factor during each pass, and we want to eventually associate each character with a \(O (\log n)\)-bit key --- the $k$-tuple that precedes that character in $s$ --- and then stably sort by the keys.  Unfortunately, we do not know yet how to invert the ST in either the StreamSort or Read-Write models.

\begin{lemma} \label{lem:SS ST}
For any given \(k = O (\log n)\), we can compute \(\mathrm{ST} (s)\) in the StreamSort model with \(O (\log n)\) bits of memory and \(O (\log \log n)\) passes.
\end{lemma}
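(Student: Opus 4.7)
The plan is to augment each character $s_i$ with the length-$k$ context $s_{i-1}\cdots s_{i-k}$ that precedes it, invoke the StreamSort sort primitive to order the augmented records by that context, and finally strip off everything but the characters themselves. The main obstacle, as foreshadowed before the statement, is that an $\Omega(\log n)$-bit key attached to each $O(1)$-bit character would inflate the tape by a $\Theta(\log n)$ factor, whereas a single pass is only permitted a constant-factor expansion; all of the non-constant cost in the lemma therefore has to be charged to a preparatory widening phase.

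First I would run $O(\log \log n)$ padding passes, each doubling the width of every slot on the tape, so that at the end of the phase the tape has $\Theta(n \log n)$ bits and every input character sits inside a slot with $\Theta(\log n)$ free bits beside it. Each such pass uses only $O(1)$ memory. Next, in one sweep, I maintain a sliding window of the most recent $k$ characters and an $O(\log n)$-bit position counter in memory --- a total of $O(k \log \sigma) + O(\log n) = O(\log n)$ bits --- and at slot $i$ write into the padded space the triple $(s_{i-1} s_{i-2} \cdots s_{i-k},\, i,\, s_i)$, using a sentinel smaller than every alphabet character to pad out contexts that overrun the left end of $s$, mirroring the BWT convention that $s_0$ is less than every character. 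Each triple occupies $O(\log n)$ bits, which the widened slots accommodate.

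I then invoke the StreamSort sort on the composite key $(\mathrm{context},\, i)$; this costs a constant number of passes by definition of the model, and carrying $i$ in the key as a tie-breaker enforces exactly the stability the ST requires when contexts coincide. A final pass scans the sorted tape and emits only the third component of each record, producing $\mathrm{ST}(s)$. Adding up, the pass count is $O(\log \log n) + O(1) = O(\log \log n)$ and every pass uses $O(\log n)$ memory, as claimed. The conceptual point, and the main difficulty to articulate carefully, is that once the tape is wide enough to carry $O(\log n)$-bit context keys beside their characters, the Schindler transform reduces to one streaming annotation followed by one sort --- so the entire non-constant pass budget is absorbed by the initial widening.
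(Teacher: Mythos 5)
Your proof takes essentially the same route as the paper's: an $O(\log\log n)$-pass widening phase that doubles slot widths until each character has $\Theta(\log n)$ bits of room, a single annotation pass that writes the length-$k$ preceding context beside each character using $O(\log n)$ bits of working memory, a constant-pass sort on those keys, and a final cleanup pass that strips the keys and padding. The only cosmetic difference is that you carry the position $i$ as an explicit tie-breaker in the sort key, whereas the paper simply invokes a stable sort; both yield the ST.
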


\begin{proof}
We make \(O (\log \log n)\) passes, each time doubling the length of each character's representation by padding it with 0s, until each character takes \(\Omega (\log n)\) bits.  We make another pass to associate each character in $s$ with the $k$-tuple that precedes it; since \(\log \sigma^k = O (\log n)\), we can use \(O (\log n)\) bits of memory to keep track of the last $k$ characters we have seen and write them as a key in front of the next character while only doubling the number of bits on the tape.  We then use \(O (1)\) passes to stably sort by those keys --- i.e., computing the ST --- and, finally, delete the keys and padding. \qed
\end{proof}

\begin{theorem} \label{thm:SS Enc UB}
In the StreamSort model with \(O (\log n)\) bits of memory and \linebreak \(O (\log n \log \log n)\) passes, we can store $s$ in \(1.8 n H_k (s) + O (\sigma^k \log n)\) bits for all $k$ simultaneously.
\end{theorem}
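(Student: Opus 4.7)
The plan is to iterate the ST-based compression pipeline of Section~\ref{sec:prelims} over all relevant context lengths, record the length of each encoding, and ultimately emit only the shortest one.  Because the ST depends on the choice of $k$, a single run of the pipeline satisfies the bound $1.8 n H_k (s) + O (\sigma^k \log n)$ only for that specific $k$; taking the minimum over a range of $k$'s is what promotes this to a ``for all $k$ simultaneously'' guarantee, since the minimum is at most the value at every particular $k$.

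Concretely, for each $k \in \{0, 1, \ldots, \lceil \log_\sigma n \rceil\}$ I would (i) invoke Lemma~\ref{lem:SS ST} to compute $\mathrm{ST} (s)$ for context length $k$ in $O (\log \log n)$ passes and $O (\log n)$ memory; (ii) apply distance coding in $O (1)$ further passes, using an $O (\sigma \log n) = O (\log n)$-bit table of last-seen positions; and (iii) apply adaptive arithmetic coding in another $O (1)$ passes and $O (\log n)$ memory.  To make the iterations possible without destroying the input, I would keep $s$ at the front of the tape, duplicate it at the start of each iteration (constant tape blow-up and $O (1)$ extra passes), run the pipeline on the duplicate, record its bit-length in an $O (\log n)$-bit register, and then delete the duplicate and its encoding before moving to the next $k$.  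After all $k$'s have been tried, one final execution of the pipeline for the winning $k^*$ writes the actual output, prefixed by an $O (\log \log n)$-bit header naming $k^*$ and the $O (\log n)$-bit primary index needed to invert the ST.  The accounting is then immediate: $O (\log n)$ outer iterations at $O (\log \log n)$ passes each give $O (\log n \log \log n)$ passes, while memory stays $O (\log n)$ throughout; and Section~\ref{sec:prelims} gives the per-$k$ length bound, so the minimum satisfies $1.8 n H_k (s) + O (\sigma^k \log n)$ for every $k \leq \log_\sigma n$, with $k > \log_\sigma n$ being vacuous since $\sigma^k \log n > n \log n$ there.

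The step most likely to require care is the tape management: carrying the original $s$ through $\Theta (\log n)$ iterations without ever losing it and without violating the StreamSort constant-factor expansion rule.  Since each iteration restores the tape to essentially its starting contents before the next begins, the per-iteration constant factor does not compound, and the $O (\log \log n)$ cost of Lemma~\ref{lem:SS ST} dominates each iteration's pass count.  The remaining pieces --- checking that distance coding and adaptive arithmetic coding really fit the StreamSort budget for constant $\sigma$ --- are routine sequential computations.
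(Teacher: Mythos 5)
Your proposal matches the paper's proof in all essentials: run the ST $\to$ distance coding $\to$ adaptive arithmetic coding pipeline for each $k = O(\log n)$, take the shortest encoding, and observe that for larger $k$ the target bound is $\omega(n)$ and hence vacuous. The extra care you take with tape management (duplicating $s$, recording lengths, rerunning for the winning $k^*$) is a reasonable implementation detail that the paper's proof leaves implicit.
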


\begin{proof}
We compute \(\mathrm{ST} (s)\) in \(O (\log \log n)\) passes and encode it in \(O (1)\) passes by composing distance coding and adaptive arithmetic coding, for each \(k = O (\log n)\); in total, we use \(O (\log n)\) bits of memory and \(O (\log n \log \log n)\) passes.  As we noted in Section~\ref{sec:prelims}, each resulting encoding contains at most \(1.8 n H_k (s) + O (\sigma^k \log n)\) bits for the value of $k$ used to compute it; thus, the shortest encoding contains \(1.8 n H_k (s) + O (\sigma^k \log n)\) for all \(k = O (\log n)\) simultaneously and so --- because
\[1.8 n H_0 (s) + O (\log n) < 1.8 n H_k (s) + (\sigma^k \log n) = \omega (n)\]
for all \(k = \omega (\log n)\) --- for all $k$ simultaneously. \qed
\end{proof}

\section{The Read-Write model} \label{sec:ReadWrite}

Anything we can do in the StreamSort model we can do in the Read-Write model using an \(O (\log n)\)-factor more passes, so Theorem~\ref{thm:SS Enc UB} implies we can store $s$ in \(1.8 n H_k (s) + (\sigma^k \log n)\) bits for all $k$ simultaneously in the Read-Write model with \(O (\log n)\) bits of memory and \(O (\log^2 n \log \log n)\) passes.  It does not, however, imply we can recover $s$ again in the Read-Write model.  Fortunately, using techniques based on the doubling algorithm by Arge, Ferragina, Grossi and Vitter~\cite{AFGV97} for sorting strings in external memory (see also~\cite{DKMS??}), we can both compute and invert \BWTs\ in the Read-Write model.  Figures~\ref{fig:BWT Enc} and~\ref{fig:BWT Dec} show how we compute and invert \(\mathrm{BWT} (\mathrm{mississippi})\); to save space, Figure~\ref{fig:BWT Dec} shows two rounds of the algorithm in each row.

\begin{figure}
\begin{center}
\resizebox{.7\textwidth}{!}
{\begin{tabular}{c|c|c|c|c}
triples & copy and sort & merge & sort & shrink\\
\hline
\begin{tabular}{ccc}
&&\\
\tag{m}{1} & 1 & \tag{i}{2}\\
\tag{i}{2} & 1 & \tag{s}{3}\\
\tag{s}{3} & 1 & \tag{s}{4}\\
\tag{s}{4} & 1 & \tag{i}{5}\\
\tag{i}{5} & 1 & \tag{s}{6}\\
\tag{s}{6} & 1 & \tag{s}{7}\\
\tag{s}{7} & 1 & \tag{i}{8}\\
\tag{i}{8} & 1 & \tag{p}{9}\\
\tag{p}{9} & 1 & \tag{p}{10}\\
\tag{p}{10} & 1 & \tag{i}{11}\\
\tag{i}{11} & 1 & \tag{\#}{12}\\
\tag{\#}{12} & 1 & \tag{m}{1}\\
&&
\end{tabular} &
\begin{tabular}{ccccccc}
&&&&&&\\
\tag{i}{11} & 1 & \tag{\#}{12} & \rule{2ex}{0ex} & \tag{\#}{12} & 1 & \tag{m}{1}\\
\tag{m}{1} & 1 & \tag{i}{2} && \tag{i}{2} & 1 & \tag{s}{3}\\
\tag{s}{4} & 1 & \tag{i}{5} && \tag{i}{5} & 1 & \tag{s}{6}\\
\tag{s}{7} & 1 & \tag{i}{8} && \tag{i}{8} & 1 & \tag{p}{9}\\
\tag{p}{10} & 1 & \tag{i}{11} && \tag{i}{11} & 1 & \tag{\#}{12}\\
\tag{\#}{12} & 1 & \tag{m}{1} && \tag{m}{1} & 1 & \tag{i}{2}\\
\tag{i}{8} & 1 & \tag{p}{9} && \tag{p}{9} & 1 & \tag{p}{10}\\
\tag{p}{9} & 1 & \tag{p}{10} && \tag{p}{10} & 1 & \tag{i}{11}\\
\tag{i}{2} & 1 & \tag{s}{3} && \tag{s}{3} & 1 & \tag{s}{4}\\
\tag{s}{3} & 1 & \tag{s}{4} && \tag{s}{4} & 1 & \tag{i}{5}\\
\tag{i}{5} & 1 & \tag{s}{6} && \tag{s}{6} & 1 & \tag{s}{7}\\
\tag{s}{6} & 1 & \tag{s}{7} && \tag{s}{7} & 1 & \tag{i}{8}\\
&&&&&&
\end{tabular} &
\begin{tabular}{ccccc}
&&&&\\
\tag{i}{11} & 1 & \tag{\#}{12} & 1 & \tag{m}{1}\\
\tag{m}{1} & 1 & \tag{i}{2} & 1 & \tag{s}{3}\\
\tag{s}{4} & 1 & \tag{i}{5} & 1 & \tag{s}{6}\\
\tag{s}{7} & 1 & \tag{i}{8} & 1 & \tag{p}{9}\\
\tag{p}{10} & 1 & \tag{i}{11} & 1 & \tag{\#}{12}\\
\tag{\#}{12} & 1 & \tag{m}{1} & 1 & \tag{i}{2}\\
\tag{i}{8} & 1 & \tag{p}{9} & 1 & \tag{p}{10}\\
\tag{p}{9} & 1 & \tag{p}{10} & 1 & \tag{i}{11}\\
\tag{i}{2} & 1 & \tag{s}{3} & 1 & \tag{s}{4}\\
\tag{s}{3} & 1 & \tag{s}{4} & 1 & \tag{i}{5}\\
\tag{i}{5} & 1 & \tag{s}{6} & 1 & \tag{s}{7}\\
\tag{s}{6} & 1 & \tag{s}{7} & 1 & \tag{i}{8}\\
&&&&
\end{tabular} &
\begin{tabular}{ccccc}
&&&&\\
\tag{i}{11} & 1 & \tag{\#}{12} & 1 & \tag{m}{1}\\
\tag{m}{1} & 1 & \tag{i}{2} & 1 & \tag{s}{3}\\
\tag{s}{4} & 1 & \tag{i}{5} & 1 & \tag{s}{6}\\
\tag{s}{7} & 1 & \tag{i}{8} & 1 & \tag{p}{9}\\
\tag{p}{10} & 1 & \tag{i}{11} & 1 & \tag{\#}{12}\\
\tag{\#}{12} & 1 & \tag{m}{1} & 1 & \tag{i}{2}\\
\tag{i}{8} & 1 & \tag{p}{9} & 1 & \tag{p}{10}\\
\tag{p}{9} & 1 & \tag{p}{10} & 1 & \tag{i}{11}\\
\tag{i}{2} & 1 & \tag{s}{3} & 1 & \tag{s}{4}\\
\tag{s}{3} & 1 & \tag{s}{4} & 1 & \tag{i}{5}\\
\tag{i}{5} & 1 & \tag{s}{6} & 1 & \tag{s}{7}\\
\tag{s}{6} & 1 & \tag{s}{7} & 1 & \tag{i}{8}\\
&&&&
\end{tabular} &
\begin{tabular}{ccccc}
&&&&\\
\tag{i}{11} & 1 & \tag{m}{1}\\
\tag{m}{1} & 2 & \tag{s}{3}\\
\tag{s}{4} & 2 & \tag{s}{6}\\
\tag{s}{7} & 2 & \tag{p}{9}\\
\tag{p}{10} & 2 & \tag{\#}{12}\\
\tag{\#}{12} & 3 & \tag{i}{2}\\
\tag{i}{8} & 4 & \tag{p}{10}\\
\tag{p}{9} & 4 & \tag{i}{11}\\
\tag{i}{2} & 5 & \tag{s}{4}\\
\tag{s}{3} & 5 & \tag{i}{5}\\
\tag{i}{5} & 5 & \tag{s}{7}\\
\tag{s}{6} & 5 & \tag{i}{8}\\
&&&&
\end{tabular}\\
\hline
\begin{tabular}{ccc}
&&\\
\tag{i}{11} & 1 & \tag{m}{1}\\
\tag{m}{1} & 2 & \tag{s}{3}\\
\tag{s}{4} & 2 & \tag{s}{6}\\
\tag{s}{7} & 2 & \tag{p}{9}\\
\tag{p}{10} & 2 & \tag{\#}{12}\\
\tag{\#}{12} & 3 & \tag{i}{2}\\
\tag{i}{8} & 4 & \tag{p}{10}\\
\tag{p}{9} & 4 & \tag{i}{11}\\
\tag{i}{2} & 5 & \tag{s}{4}\\
\tag{s}{3} & 5 & \tag{i}{5}\\
\tag{i}{5} & 5 & \tag{s}{7}\\
\tag{s}{6} & 5 & \tag{i}{8}\\
&&
\end{tabular} &
\begin{tabular}{ccccccc}
&&&&&&\\
\tag{p}{10} & 2 & \tag{\#}{12} & \rule{2ex}{0ex} & \tag{\#}{12} & 3 & \tag{i}{2}\\
\tag{\#}{12} & 3 & \tag{i}{2} && \tag{i}{2} & 5 & \tag{s}{4}\\
\tag{s}{3} & 5 & \tag{i}{5} && \tag{i}{5} & 5 & \tag{s}{7}\\
\tag{s}{6} & 5 & \tag{i}{8} && \tag{i}{8} & 4 & \tag{p}{10}\\
\tag{p}{9} & 4 & \tag{i}{11} && \tag{i}{11} & 1 & \tag{m}{1}\\
\tag{i}{11} & 1 & \tag{m}{1} && \tag{m}{1} & 2 & \tag{s}{3}\\
\tag{s}{7} & 2 & \tag{p}{9} && \tag{p}{9} & 4 & \tag{i}{11}\\
\tag{i}{8} & 4 & \tag{p}{10} && \tag{p}{10} & 2 & \tag{\#}{12}\\
\tag{m}{1} & 2 & \tag{s}{3} && \tag{s}{3} & 5 & \tag{i}{5}\\
\tag{i}{2} & 5 & \tag{s}{4} && \tag{s}{4} & 2 & \tag{s}{6}\\
\tag{s}{4} & 2 & \tag{s}{6} && \tag{s}{6} & 5 & \tag{i}{8}\\
\tag{i}{5} & 5 & \tag{s}{7} && \tag{s}{7} & 2 & \tag{p}{9}\\
&&&&&&
\end{tabular} &
\begin{tabular}{ccccc}
&&&&\\
\tag{p}{10} & 2 & \tag{\#}{12} & 3 & \tag{i}{2}\\
\tag{\#}{12} & 3 & \tag{i}{2} & 5 & \tag{s}{4}\\
\tag{s}{3} & 5 & \tag{i}{5} & 5 & \tag{s}{7}\\
\tag{s}{6} & 5 & \tag{i}{8} & 4 & \tag{p}{10}\\
\tag{p}{9} & 4 & \tag{i}{11} & 1 & \tag{m}{1}\\
\tag{i}{11} & 1 & \tag{m}{1} & 2 & \tag{s}{3}\\
\tag{s}{7} & 2 & \tag{p}{9} & 4 & \tag{i}{11}\\
\tag{i}{8} & 4 & \tag{p}{10} & 2 & \tag{\#}{12}\\
\tag{m}{1} & 2 & \tag{s}{3} & 5 & \tag{i}{5}\\
\tag{i}{2} & 5 & \tag{s}{4} & 2 & \tag{s}{6}\\
\tag{s}{4} & 2 & \tag{s}{6} & 5 & \tag{i}{8}\\
\tag{i}{5} & 5 & \tag{s}{7} & 2 & \tag{p}{9}\\
&&&&
\end{tabular} &
\begin{tabular}{ccccc}
&&&&\\
\tag{p}{9} & 4 & \tag{i}{11} & 1 & \tag{m}{1}\\
\tag{i}{11} & 1 & \tag{m}{1} & 2 & \tag{s}{3}\\
\tag{i}{8} & 4 & \tag{p}{10} & 2 & \tag{\#}{12}\\
\tag{i}{2} & 5 & \tag{s}{4} & 2 & \tag{s}{6}\\
\tag{i}{5} & 5 & \tag{s}{7} & 2 & \tag{p}{9}\\
\tag{p}{10} & 2 & \tag{\#}{12} & 3 & \tag{i}{2}\\
\tag{s}{6} & 5 & \tag{i}{8} & 4 & \tag{p}{10}\\
\tag{s}{7} & 2 & \tag{p}{9} & 4 & \tag{i}{11}\\
\tag{\#}{12} & 3 & \tag{i}{2} & 5 & \tag{s}{4}\\
\tag{s}{3} & 5 & \tag{i}{5} & 5 & \tag{s}{7}\\
\tag{m}{1} & 2 & \tag{s}{3} & 5 & \tag{i}{5}\\
\tag{s}{4} & 2 & \tag{s}{6} & 5 & \tag{i}{8}\\
&&&&
\end{tabular} &
\begin{tabular}{ccc}
&&\\
\tag{p}{9} & 1 & \tag{m}{1}\\
\tag{i}{11} & 2 & \tag{s}{3}\\
\tag{i}{8} & 3 & \tag{\#}{12}\\
\tag{i}{2} & 4 & \tag{s}{6}\\
\tag{i}{5} & 4 & \tag{p}{9}\\
\tag{p}{10} & 5 & \tag{i}{2}\\
\tag{s}{6} & 6 & \tag{p}{10}\\
\tag{s}{7} & 7 & \tag{i}{11}\\
\tag{\#}{12} & 8 & \tag{s}{4}\\
\tag{s}{3} & 9 & \tag{s}{7}\\
\tag{m}{1} & 10 & \tag{i}{5}\\
\tag{s}{4} & 10 & \tag{i}{8}\\
&&
\end{tabular}\\
\hline
\begin{tabular}{ccc}
&&\\
\tag{p}{9} & 1 & \tag{m}{1}\\
\tag{i}{11} & 2 & \tag{s}{3}\\
\tag{i}{8} & 3 & \tag{\#}{12}\\
\tag{i}{2} & 4 & \tag{s}{6}\\
\tag{i}{5} & 4 & \tag{p}{9}\\
\tag{p}{10} & 5 & \tag{i}{2}\\
\tag{s}{6} & 6 & \tag{p}{10}\\
\tag{s}{7} & 7 & \tag{i}{11}\\
\tag{\#}{12} & 8 & \tag{s}{4}\\
\tag{s}{3} & 9 & \tag{s}{7}\\
\tag{m}{1} & 10 & \tag{i}{5}\\
\tag{s}{4} & 10 & \tag{i}{8}
\end{tabular} &
\begin{tabular}{ccccccc}
&&&&&&\\
\tag{i}{8} & 3 & \tag{\#}{12} & \rule{2ex}{0ex} & \tag{\#}{12} & 8 & \tag{s}{4}\\
\tag{p}{10} & 5 & \tag{i}{2} && \tag{i}{2} & 4 & \tag{s}{6}\\
\tag{m}{1} & 10 & \tag{i}{5} && \tag{i}{5} & 4 & \tag{p}{9}\\
\tag{s}{4} & 10 & \tag{i}{8} && \tag{i}{8} & 3 & \tag{\#}{12}\\
\tag{s}{7} & 7 & \tag{i}{11} && \tag{i}{11} & 2 & \tag{s}{3}\\
\tag{p}{9} & 1 & \tag{m}{1} && \tag{m}{1} & 10 & \tag{i}{5}\\
\tag{i}{5} & 4 & \tag{p}{9} && \tag{p}{9} & 1 & \tag{m}{1}\\
\tag{s}{6} & 6 & \tag{p}{10} && \tag{p}{10} & 5 & \tag{i}{2}\\
\tag{i}{11} & 2 & \tag{s}{3} && \tag{s}{3} & 9 & \tag{s}{7}\\
\tag{\#}{12} & 8 & \tag{s}{4} && \tag{s}{4} & 10 & \tag{i}{8}\\
\tag{i}{2} & 4 & \tag{s}{6} && \tag{s}{6} & 6 & \tag{p}{10}\\
\tag{s}{3} & 9 & \tag{s}{7} && \tag{s}{7} & 7 & \tag{i}{11}
\end{tabular} &
\begin{tabular}{ccccc}
&&&&\\
\tag{i}{8} & 3 & \tag{\#}{12} & 8 & \tag{s}{4}\\
\tag{p}{10} & 5 & \tag{i}{2} & 4 & \tag{s}{6}\\
\tag{m}{1} & 10 & \tag{i}{5} & 4 & \tag{p}{9}\\
\tag{s}{4} & 10 & \tag{i}{8} & 3 & \tag{\#}{12}\\
\tag{s}{7} & 7 & \tag{i}{11} & 2 & \tag{s}{3}\\
\tag{p}{9} & 1 & \tag{m}{1} & 10 & \tag{i}{5}\\
\tag{i}{5} & 4 & \tag{p}{9} & 1 & \tag{m}{1}\\
\tag{s}{6} & 6 & \tag{p}{10} & 5 & \tag{i}{2}\\
\tag{i}{11} & 2 & \tag{s}{3} & 9 & \tag{s}{7}\\
\tag{\#}{12} & 8 & \tag{s}{4} & 10 & \tag{i}{8}\\
\tag{i}{2} & 4 & \tag{s}{6} & 6 & \tag{p}{10}\\
\tag{s}{3} & 9 & \tag{s}{7} & 7 & \tag{i}{11}
\end{tabular} &
\begin{tabular}{ccccc}
&&&&\\
\tag{i}{5} & 4 & \tag{p}{9} & 1 & \tag{m}{1}\\
\tag{s}{7} & 7 & \tag{i}{11} & 2 & \tag{s}{3}\\
\tag{s}{4} & 10 & \tag{i}{8} & 3 & \tag{\#}{12}\\
\tag{p}{10} & 5 & \tag{i}{2} & 4 & \tag{s}{6}\\
\tag{m}{1} & 10 & \tag{i}{5} & 4 & \tag{p}{9}\\
\tag{s}{6} & 6 & \tag{p}{10} & 5 & \tag{i}{2}\\
\tag{i}{2} & 4 & \tag{s}{6} & 6 & \tag{p}{10}\\
\tag{s}{3} & 9 & \tag{s}{7} & 7 & \tag{i}{11}\\
\tag{i}{8} & 3 & \tag{\#}{12} & 8 & \tag{s}{4}\\
\tag{i}{11} & 2 & \tag{s}{3} & 9 & \tag{s}{7}\\
\tag{p}{9} & 1 & \tag{m}{1} & 10 & \tag{i}{5}\\
\tag{\#}{12} & 8 & \tag{s}{4} & 10 & \tag{i}{8}
\end{tabular} &
\begin{tabular}{ccc}
&&\\
\tag{i}{5} & 1 & \tag{m}{1}\\
\tag{s}{7} & 2 & \tag{s}{3}\\
\tag{s}{4} & 3 & \tag{\#}{12}\\
\tag{p}{10} & 4 & \tag{s}{6}\\
\tag{m}{1} & 5 & \tag{p}{9}\\
\tag{s}{6} & 6 & \tag{i}{2}\\
\tag{i}{2} & 7 & \tag{p}{10}\\
\tag{s}{3} & 8 & \tag{i}{11}\\
\tag{i}{8} & 9 & \tag{s}{4}\\
\tag{i}{11} & 10 & \tag{s}{7}\\
\tag{p}{9} & 11 & \tag{i}{5}\\
\tag{\#}{12} & 12 & \tag{i}{8}
\end{tabular}
\end{tabular}}
\end{center}
\vspace{-3ex}
\caption{Computing the BWT in the Read-Write model.}
\label{fig:BWT Enc}
\vspace{2ex}
\begin{center}
\resizebox{.9\textwidth}{!}
{\begin{tabular}{c|c|c|c|c|c|c}
triples & copy and sort & merge & shrink & copy and sort & merge & shrink\\
\hline
\begin{tabular}{ccc}
&&\\
\tag{\#}{3} & 12 & \tag{m}{1}\\
\tag{i}{6} & ? & \tag{s}{2}\\
\tag{i}{8} & ? & \tag{\#}{3}\\
\tag{i}{11} & ? & \tag{s}{4}\\
\tag{i}{12} & ? & \tag{p}{5}\\
\tag{m}{1} & ? & \tag{i}{6}\\
\tag{p}{5} & ? & \tag{p}{7}\\
\tag{p}{7} & ? & \tag{i}{8}\\
\tag{s}{2} & ? & \tag{s}{9}\\
\tag{s}{4} & ? & \tag{s}{10}\\
\tag{s}{9} & ? & \tag{i}{11}\\
\tag{s}{10} & ? & \tag{i}{12}\\
&&
\end{tabular} &
\begin{tabular}{ccccccc}
&&&&&&\\
\tag{i}{8} & ? & \tag{\#}{3} & \rule{2ex}{0ex}& \tag{\#}{3} & 12 & \tag{m}{1}\\
\tag{m}{1} & ? & \tag{i}{6} && \tag{i}{6} & ? & \tag{s}{2}\\
\tag{p}{7} & ? & \tag{i}{8} && \tag{i}{8} & ? & \tag{\#}{3}\\
\tag{s}{9} & ? & \tag{i}{11} && \tag{i}{11} & ? & \tag{s}{4}\\
\tag{s}{10} & ? & \tag{i}{12} && \tag{i}{12} & ? & \tag{p}{5}\\
\tag{\#}{3} & 12 & \tag{m}{1} && \tag{m}{1} & ? & \tag{i}{6}\\
\tag{i}{12} & ? & \tag{p}{5} && \tag{p}{5} & ? & \tag{p}{7}\\
\tag{p}{5} & ? & \tag{p}{7} && \tag{p}{7} & ? & \tag{i}{8}\\
\tag{i}{6} & ? & \tag{s}{2} && \tag{s}{2} & ? & \tag{s}{9}\\
\tag{i}{11} & ? & \tag{s}{4} && \tag{s}{4} & ? & \tag{s}{10}\\
\tag{s}{2} & ? & \tag{s}{9} && \tag{s}{9} & ? & \tag{i}{11}\\
\tag{s}{4} & ? & \tag{s}{10} && \tag{s}{10} & ? & \tag{i}{12}\\
&&&&&&
\end{tabular} &
\begin{tabular}{ccccc}
&&&&\\
\tag{i}{8} & ? & \tag{\#}{3} & 12 & \tag{m}{1}\\
\tag{m}{1} & ? & \tag{i}{6} & ? & \tag{s}{2}\\
\tag{p}{7} & ? & \tag{i}{8} & ? & \tag{\#}{3}\\
\tag{s}{9} & ? & \tag{i}{11} & ? & \tag{s}{4}\\
\tag{s}{10} & ? & \tag{i}{12} & ? & \tag{p}{5}\\
\tag{\#}{3} & 12 & \tag{m}{1} & ? & \tag{i}{6}\\
\tag{i}{12} & ? & \tag{p}{5} & ? & \tag{p}{7}\\
\tag{p}{5} & ? & \tag{p}{7} & ? & \tag{i}{8}\\
\tag{i}{6} & ? & \tag{s}{2} & ? & \tag{s}{9}\\
\tag{i}{11} & ? & \tag{s}{4} & ? & \tag{s}{10}\\
\tag{s}{2} & ? & \tag{s}{9} & ? & \tag{i}{11}\\
\tag{s}{4} & ? & \tag{s}{10} & ? & \tag{i}{12}\\
&&&&
\end{tabular} &
\begin{tabular}{ccc}
&&\\
\tag{i}{8} & 11 & \tag{m}{1}\\
\tag{m}{1} & ? & \tag{s}{2}\\
\tag{p}{7} & ? & \tag{\#}{3}\\
\tag{s}{9} & ? & \tag{s}{4}\\
\tag{s}{10} & ? & \tag{p}{5}\\
\tag{\#}{3} & 12 & \tag{i}{6}\\
\tag{i}{12} & ? & \tag{p}{7}\\
\tag{p}{5} & ? & \tag{i}{8}\\
\tag{i}{6} & ? & \tag{s}{9}\\
\tag{i}{11} & ? & \tag{s}{10}\\
\tag{s}{2} & ? & \tag{i}{11}\\
\tag{s}{4} & ? & \tag{i}{12}\\
&&
\end{tabular} &
\begin{tabular}{ccccccc}
&&&&&&\\
\tag{p}{7} & ? & \tag{\#}{3} & \rule{2ex}{0ex} & \tag{\#}{3} & 12 & \tag{i}{6}\\
\tag{\#}{3} & 12 & \tag{i}{6} && \tag{i}{6} & ? & \tag{s}{9}\\
\tag{p}{5} & ? & \tag{i}{8} && \tag{i}{8} & 11 & \tag{m}{1}\\
\tag{s}{2} & ? & \tag{i}{11} && \tag{i}{11} & ? & \tag{s}{10}\\
\tag{s}{4} & ? & \tag{i}{12} && \tag{i}{12} & ? & \tag{p}{7}\\
\tag{i}{8} & 11 & \tag{m}{1} && \tag{m}{1} & ? & \tag{s}{2}\\
\tag{s}{10} & ? & \tag{p}{5} && \tag{p}{5} & ? & \tag{i}{8}\\
\tag{i}{12} & ? & \tag{p}{7} && \tag{p}{7} & ? & \tag{\#}{3}\\
\tag{m}{1} & ? & \tag{s}{2} && \tag{s}{2} & ? & \tag{i}{11}\\
\tag{s}{9} & ? & \tag{s}{4} && \tag{s}{4} & ? & \tag{i}{12}\\
\tag{i}{6} & ? & \tag{s}{9} && \tag{s}{9} & ? & \tag{s}{4}\\
\tag{i}{11} & ? & \tag{s}{10} && \tag{s}{10} & ? & \tag{p}{5}\\
&&&&&&
\end{tabular} &
\begin{tabular}{ccccc}
&&&&\\
\tag{p}{7} & ? & \tag{\#}{3} & 12 & \tag{i}{6}\\
\tag{\#}{3} & 12 & \tag{i}{6} & ? & \tag{s}{9}\\
\tag{p}{5} & ? & \tag{i}{8} & 11 & \tag{m}{1}\\
\tag{s}{2} & ? & \tag{i}{11} & ? & \tag{s}{10}\\
\tag{s}{4} & ? & \tag{i}{12} & ? & \tag{p}{7}\\
\tag{i}{8} & 11 & \tag{m}{1} & ? & \tag{s}{2}\\
\tag{s}{10} & ? & \tag{p}{5} & ? & \tag{i}{8}\\
\tag{i}{12} & ? & \tag{p}{7} & ? & \tag{\#}{3}\\
\tag{m}{1} & ? & \tag{s}{2} & ? & \tag{i}{11}\\
\tag{s}{9} & ? & \tag{s}{4} & ? & \tag{i}{12}\\
\tag{i}{6} & ? & \tag{s}{9} & ? & \tag{s}{4}\\
\tag{i}{11} & ? & \tag{s}{10} & ? & \tag{p}{5}\\
&&&&
\end{tabular} &
\begin{tabular}{ccc}
&&\\
\tag{p}{7} & 10 & \tag{i}{6}\\
\tag{\#}{3} & 12 & \tag{s}{9}\\
\tag{p}{5} & 9 & \tag{m}{1}\\
\tag{s}{2} & ? & \tag{s}{10}\\
\tag{s}{4} & ? & \tag{p}{7}\\
\tag{i}{8} & 11 & \tag{s}{2}\\
\tag{s}{10} & ? & \tag{i}{8}\\
\tag{i}{12} & ? & \tag{\#}{3}\\
\tag{m}{1} & ? & \tag{i}{11}\\
\tag{s}{9} & ? & \tag{i}{12}\\
\tag{i}{6} & ? & \tag{s}{4}\\
\tag{i}{11} & ? & \tag{p}{5}\\
&&
\end{tabular}\\
\hline
\begin{tabular}{ccc}
&&\\
\tag{p}{7} & 10 & \tag{i}{6}\\
\tag{\#}{3} & 12 & \tag{s}{9}\\
\tag{p}{5} & 9 & \tag{m}{1}\\
\tag{s}{2} & ? & \tag{s}{10}\\
\tag{s}{4} & ? & \tag{p}{7}\\
\tag{i}{8} & 11 & \tag{s}{2}\\
\tag{s}{10} & ? & \tag{i}{8}\\
\tag{i}{12} & ? & \tag{\#}{3}\\
\tag{m}{1} & ? & \tag{i}{11}\\
\tag{s}{9} & ? & \tag{i}{12}\\
\tag{i}{6} & ? & \tag{s}{4}\\
\tag{i}{11} & ? & \tag{p}{5}
\end{tabular} &
\begin{tabular}{ccccccc}
&&&&&&\\
\tag{i}{12} & ? & \tag{\#}{3} & \rule{2ex}{0ex} & \tag{\#}{3} & 12 & \tag{s}{9}\\
\tag{p}{7} & 10 & \tag{i}{6} && \tag{i}{6} & ? & \tag{s}{4}\\
\tag{s}{10} & ? & \tag{i}{8} && \tag{i}{8} & 11 & \tag{s}{2}\\
\tag{m}{1} & ? & \tag{i}{11} && \tag{i}{11} & ? & \tag{p}{5}\\
\tag{s}{9} & ? & \tag{i}{12} && \tag{i}{12} & ? & \tag{\#}{3}\\
\tag{p}{5} & 9 & \tag{m}{1} && \tag{m}{1} & ? & \tag{i}{11}\\
\tag{i}{11} & ? & \tag{p}{5} && \tag{p}{5} & 9 & \tag{m}{1}\\
\tag{s}{4} & ? & \tag{p}{7} && \tag{p}{7} & 10 & \tag{i}{6}\\
\tag{i}{8} & 11 & \tag{s}{2} && \tag{s}{2} & ? & \tag{s}{10}\\
\tag{i}{6} & ? & \tag{s}{4} && \tag{s}{4} & ? & \tag{p}{7}\\
\tag{\#}{3} & 12 & \tag{s}{9} && \tag{s}{9} & ? & \tag{i}{12}\\
\tag{s}{2} & ? & \tag{s}{10} && \tag{s}{10} & ? & \tag{i}{8}
\end{tabular} &
\begin{tabular}{ccccc}
&&&&\\
\tag{i}{12} & ? & \tag{\#}{3} & 12 & \tag{s}{9}\\
\tag{p}{7} & 10 & \tag{i}{6} & ? & \tag{s}{4}\\
\tag{s}{10} & ? & \tag{i}{8} & 11 & \tag{s}{2}\\
\tag{m}{1} & ? & \tag{i}{11} & ? & \tag{p}{5}\\
\tag{s}{9} & ? & \tag{i}{12} & ? & \tag{\#}{3}\\
\tag{p}{5} & 9 & \tag{m}{1} & ? & \tag{i}{11}\\
\tag{i}{11} & ? & \tag{p}{5} & 9 & \tag{m}{1}\\
\tag{s}{4} & ? & \tag{p}{7} & 10 & \tag{i}{6}\\
\tag{i}{8} & 11 & \tag{s}{2} & ? & \tag{s}{10}\\
\tag{i}{6} & ? & \tag{s}{4} & ? & \tag{p}{7}\\
\tag{\#}{3} & 12 & \tag{s}{9} & ? & \tag{i}{12}\\
\tag{s}{2} & ? & \tag{s}{10} & ? & \tag{i}{8}
\end{tabular} &
\begin{tabular}{ccc}
&&\\
\tag{i}{12} & 8 & \tag{s}{9}\\
\tag{p}{7} & 10 & \tag{s}{4}\\
\tag{s}{10} & 7 & \tag{s}{2}\\
\tag{m}{1} & ? & \tag{p}{5}\\
\tag{s}{9} & ? & \tag{\#}{3}\\
\tag{p}{5} & 9 & \tag{i}{11}\\
\tag{i}{11} & 5 & \tag{m}{1}\\
\tag{s}{4} & 6 & \tag{i}{6}\\
\tag{i}{8} & 11 & \tag{s}{10}\\
\tag{i}{6} & ? & \tag{p}{7}\\
\tag{\#}{3} & 12 & \tag{i}{12}\\
\tag{s}{2} & ? & \tag{i}{8}
\end{tabular} &
\begin{tabular}{ccccccc}
&&&&&&\\
\tag{s}{9} & ? & \tag{\#}{3} & \rule{2ex}{0ex} & \tag{\#}{3} & 12 & \tag{i}{12}\\
\tag{s}{4} & 6 & \tag{i}{6} && \tag{i}{6} & ? & \tag{p}{7}\\
\tag{s}{2} & ? & \tag{i}{8} && \tag{i}{8} & 11 & \tag{s}{10}\\
\tag{p}{5} & 9 & \tag{i}{11} && \tag{i}{11} & 5 & \tag{m}{1}\\
\tag{\#}{3} & 12 & \tag{i}{12} && \tag{i}{12} & 8 & \tag{s}{9}\\
\tag{i}{11} & 5 & \tag{m}{1} && \tag{m}{1} & ? & \tag{p}{5}\\
\tag{m}{1} & ? & \tag{p}{5} && \tag{p}{5} & 9 & \tag{i}{11}\\
\tag{i}{6} & ? & \tag{p}{7} && \tag{p}{7} & 10 & \tag{s}{4}\\
\tag{s}{10} & 7 & \tag{s}{2} && \tag{s}{2} & ? & \tag{i}{8}\\
\tag{p}{7} & 10 & \tag{s}{4} && \tag{s}{4} & 6 & \tag{i}{6}\\
\tag{i}{12} & 8 & \tag{s}{9} && \tag{s}{9} & ? & \tag{\#}{3}\\
\tag{i}{8} & 11 & \tag{s}{10} && \tag{s}{10} & 7 & \tag{s}{2}
\end{tabular} &
\begin{tabular}{ccccc}
&&&&\\
\tag{s}{9} & ? & \tag{\#}{3} & 12 & \tag{i}{12}\\
\tag{s}{4} & 6 & \tag{i}{6} & ? & \tag{p}{7}\\
\tag{s}{2} & ? & \tag{i}{8} & 11 & \tag{s}{10}\\
\tag{p}{5} & 9 & \tag{i}{11} & 5 & \tag{m}{1}\\
\tag{\#}{3} & 12 & \tag{i}{12} & 8 & \tag{s}{9}\\
\tag{i}{11} & 5 & \tag{m}{1} & ? & \tag{p}{5}\\
\tag{m}{1} & ? & \tag{p}{5} & 9 & \tag{i}{11}\\
\tag{i}{6} & ? & \tag{p}{7} & 10 & \tag{s}{4}\\
\tag{s}{10} & 7 & \tag{s}{2} & ? & \tag{i}{8}\\
\tag{p}{7} & 10 & \tag{s}{4} & 6 & \tag{i}{6}\\
\tag{i}{12} & 8 & \tag{s}{9} & ? & \tag{\#}{3}\\
\tag{i}{8} & 11 & \tag{s}{10} & 7 & \tag{s}{2}
\end{tabular} &
\begin{tabular}{ccc}
&&\\
\tag{s}{9} & 4 & \tag{i}{12}\\
\tag{s}{4} & 6 & \tag{p}{7}\\
\tag{s}{2} & 3 & \tag{s}{10}\\
\tag{p}{5} & 9 & \tag{m}{1}\\
\tag{\#}{3} & 12 & \tag{s}{9}\\
\tag{i}{11} & 5 & \tag{p}{5}\\
\tag{m}{1} & 1 & \tag{i}{11}\\
\tag{i}{6} & 2 & \tag{s}{4}\\
\tag{s}{10} & 7 & \tag{i}{8}\\
\tag{p}{7} & 10 & \tag{i}{6}\\
\tag{i}{12} & 8 & \tag{\#}{3}\\
\tag{i}{8} & 11 & \tag{s}{2}
\end{tabular}
\end{tabular}}
\end{center}
\vspace{-3ex}
\caption{Inverting the BWT in the Read-Write model.}
\label{fig:BWT Dec}
\end{figure}

\begin{theorem} \label{thm:RW UB}
In the Read-Write model with \(O (\log n)\) bits of memory and \(O (\log^2 n)\) passes, we can store $s$ in, and later recover it from, \(1.8 n H_k (s) + O (\sigma^k \log n)\) bits for all $k$ simultaneously.
\end{theorem}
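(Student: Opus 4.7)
The plan is to assemble the algorithm from two independent pieces. First, a Read-Write procedure that both computes and inverts $\BWTs$ with $O(\log n)$ bits of memory and $O(\log^2 n)$ passes, using the doubling scheme of Arge, Ferragina, Grossi and Vitter illustrated in Figures~\ref{fig:BWT Enc} and~\ref{fig:BWT Dec}. Second, a Read-Write simulation of the distance-coding and adaptive-arithmetic-coding pipeline which, composed with the BWT, is already known by Section~\ref{sec:prelims} to produce an encoding of $1.8 n H_k(s) + O(\sigma^k \log n)$ bits for all $k$ simultaneously. I would build these pieces in that order and then compose them.

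For computing $\BWTs$, I would maintain on the tape $n$ triples of the form $(s_i, \delta, s_j)$ with $j \equiv i + \delta \pmod{n+1}$, starting from $\delta = 1$. A round \emph{copies} and \emph{sorts} the triples so that adjacent copies share their middle character, \emph{merges} them into records spanning a context of length $2\delta$, \emph{re-sorts} by a composite character key, and \emph{shrinks} back into triples whose offset has doubled. After $O(\log n)$ rounds $\delta$ exceeds $n$ and reading the first components in tape order gives $\BWTs$. Each round is dominated by one sort of $n$ records of $O(\log n)$ bits, which in Read-Write costs $O(\log n)$ passes via merge sort on a work tape --- hence $O(\log^2 n)$ passes overall with $O(\log n)$ memory. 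To invert $\BWTs$ I would run the dual pattern shown in Figure~\ref{fig:BWT Dec}: starting from the (predecessor, successor) pairs read off from $\BWTs$ with all offsets unknown (``?''), each doubling round progressively fills in an offset whenever a merged pair happens to span two already-known positions, and after $O(\log n)$ rounds all offsets are resolved, so reading third components in offset order spells $s$.

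For the encoding and decoding layers, I would simulate the StreamSort construction of Theorem~\ref{thm:SS Enc UB} step by step: each linear StreamSort scan becomes an $O(1)$-pass Read-Write scan, and each StreamSort sort becomes an $O(\log n)$-pass Read-Write merge sort, costing only $O(\log n)$ additional passes and preserving the $O(\log^2 n)$ bound.

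The main obstacle is verifying the inversion direction. Specifically, one must check that the comparison keys used by every sort inside the inverse doubling loop involve only fields known from $\BWTs$ and never the ``?''-marked offsets, so the sorts remain well-defined at every round; and that enough new offsets do get discovered during each round for the loop to terminate with every offset known. Figure~\ref{fig:BWT Dec} illustrates the invariant on mississippi, and the general statement should follow by induction on the round number, with the base case given by the positions trivially identifiable from the sentinel in $\BWTs$.
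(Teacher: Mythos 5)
Your high-level plan matches the paper's: compute and invert $\BWTs$ by a doubling scheme in the Read-Write model, then wrap it with distance coding and adaptive arithmetic coding (both of which are single linear scans, so there is no need to simulate any StreamSort sorting for the coding layer --- the paper simply notes that distance coding and arithmetic coding each take $O(1)$ passes and $O(\log n)$ memory).

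There is, however, a genuine gap in your description of the doubling itself. You write the triples as $(s_i, \delta, s_j)$ with $j \equiv i + \delta \pmod{n+1}$, suggesting the middle field records the \emph{positional offset} $\delta$. That cannot be right: if the middle field carried only the offset, it would be identical across all triples in a given round, and your ``re-sort by a composite character key'' step would have only the single characters $s_i$ and $s_j$ to compare --- no way to order two triples whose first and third characters agree. What makes the doubling work (and what Figure~\ref{fig:BWT Enc} actually shows after the first shrink step) is that the middle field is a \emph{rank}: after merging two adjacent triples into a quintuple $(s_i, x_2, s_{i+2^{r-1}}, x_1, s_{i+2^r})$, you sort by $(x_1, s_{i+2^{r-1}}, x_2)$ and then renumber, so the new middle field encodes the lexicographic order of the length-$(2^r-1)$ context sandwiched between the endpoints. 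The offset never needs to be stored at all --- it is implicit in the round number. This rank propagation is the whole point of the Arge--Ferragina--Grossi--Vitter doubling trick: it lets constant-size records represent comparisons of contexts whose length grows geometrically, keeping the tape at $O(n \log n)$ bits. Without it, by the last round you would need to compare $\Theta(n)$-character contexts directly, which cannot be done with $O(\log n)$-bit records. (Small related slip: when the second components finally equal $1,\ldots,n+1$, it is the \emph{third} components read in tape order that spell $\BWTs$, not the first.)

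On your identified obstacle in the inverse direction, the resolution is exactly the one you suspect but do not prove: the sorts inside the inverse doubling loop key only on the tagged characters in the first and third fields (with their unique identifiers), never on the possibly-unknown middle positions. The middle positions are pure payload; one is seeded from the sentinel ($n+1$), and in each round the merge lets a triple with an unknown position learn it from its already-known partner (the position of the partner minus one). The positions therefore propagate backward along $s$ at doubling distance per round and are all resolved after $O(\log n)$ rounds, at which point a single sort by the resolved positions recovers $s$.
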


\begin{proof}
As we noted in Section~\ref{sec:prelims}, once we have \BWTs, we can store it in \(1.8 n H_k (s) + O (\sigma^k \log n)\) bits for all $k$ simultaneously by composing distance coding and adaptive arithmetic coding.  To compute or invert distance coding and to encode or decode adaptive arithmetic coding all take \(O (\log n)\) bits of memory and \(O (1)\) passes.  Therefore, we need consider only how to compute and invert \BWTs.  Due to space constraints, here we only sketch these procedures; we will give full descriptions and analyses in the full paper.

To compute \BWTs, we append a special character that is lexicographically less than any character in the alphabet ($\#$ in Figures~\ref{fig:BWT Enc} and~\ref{fig:BWT Dec}).  We tag each character in $s$ with a unique identifier (e.g., its position; in this model, we can expand the contents of the tape by more than a constant factor during a pass, so we do not need to pad --- although \(O (\log \log n)\) extra rounds would not make a difference here anyway), then form a triple from it by appending a 1 and its successor in $s$.  We make two copies of the set of triples, sort the first copy by the last component and the second copy by the first component (breaking ties by characters identifiers), and merge them to form quintuples.  (It is the copying and sorting step that we do not see how to do in the StreamSort model.)  We sort the set of quintuples by the fourth component, breaking ties by the third component (ignoring characters' identifiers), breaking continued ties by the second component, and breaking continued ties arbitrarily.  (In the first round, all the fourth and second components are 1, so we effectively sort by the third component; notice the third component is the first component's successor in $s$ and the fifth component's predecessor in $s$, taking indices modulo \(n + 1\).)  Finally, we replace the middle triples --- the second, third and fourth components --- with numbers, starting at one and incrementing whenever we find a triple different from the one before.  This process results in another set of triples; if the second components are the numbers 1 through \(n + 1\), we stop; otherwise, we repeat the procedure from the point of copying the triples.

Notice that, at the end of the first round, the first and third components in any triple are two positions apart in $s$, taking indices modulo \(n + 1\).  Also, for any two triples \((s_i, x, s_{i + 2})\) and \((s_j, y, s_{j + 2})\), the comparative relationship between $x$ and $y$ is the same as the lexicographic relationship between $s_{i + 1}$ and $s_{j + 1}$.  At the end of the second round, the first and third components in any triple are four positions apart in $s$ and, for any two triples \((s_i, x, s_{i + 4})\) and \((s_j, y, s_{j + 4})\), the comparative relationship between $x$ and $y$ is the same as the lexicographic relationship between \(s_{i + 3} s_{i + 2} s_{i + 1}\) and \(s_{j + 3} s_{j + 2} s_{j + 1}\). To see why, notice the relationship between $x$ and $y$ depends, in decreasing order of priority, on the relationships between $x_1$ and $y_1$, $s_{i + 2}$ and $s_{j + 2}$, and $x_2$ and $y_2$ in the quintuples
\((s_i, x_2, s_{i + 2}, x_1, s_{i + 4})\) and \((s_j, y_2, s_{j + 2}, y_1, s_{j + 4})\); since these quintuples are formed by joining triples \((s_i, x_2, s_{i + 2})\) and \((s_{i + 2}, x_1, s_{i + 4})\) and \((s_j, y_2, s_{j + 2})\) created during the first round, the comparative relationships between $x_1$ and $y_1$ and $x_2$ and $y_2$ are the same as the lexicographic relationships between $s_{i + 3}$ and $s_{j + 3}$ and $s_{i + 1}$ and $s_{j + 1}$.  After \(O (\log n)\) rounds, when the second components are the numbers 1 through n, they indicate the lexicographic relationships of the prefixes of the third components, so the third components are \BWTs --- in our example in Figure~\ref{fig:BWT Enc}, \(\BWTs = \mathrm{m s \# s p i p i s s i i}\).  (We note that, if we sorted quintuples by the second component, using the third and fourth to break ties, then we would compute the suffix array of $s$.)  We need only \(O (\log n)\) bits of memory for this procedure; each sorting step takes \(O (\log n)\) passes and each other step takes \(O (1)\) passes, so we use \(O (\log^2 n)\) passes altogether.

To invert \BWTs, we again tag each character in \BWTs\ with a unique identifier and form triples.  This time, however, we prepend a ? to each character, then prepend the corresponding character in the stable sort of \BWTs; finally, in the triple whose first component is the special character not in the alphabet, we replace the ? by \(n + 1\).  As for computing \BWTs, we make two copies of the set of triples, sort them and merge them.  This time, for each triple, if the second component is not a ? or the third component is a ?, then we simply delete the third and fourth components; if the second component is a ? and the third component is, then we put the third component minus 1 in the second component, then delete the third and fourth components.  This process results in another set of triples; if the second components are the numbers 1 through \(n + 1\) in some order, we stop; otherwise, we repeat the procedure from the point of copying the triples.

By the definition of the BWT, the $i$th character in the stable sort of \BWTs\ is the predecessor in $s$ of the $i$th character in \BWTs.  Therefore, at the end of the first round, the first and third components in any triple are two positions apart in $s$, taking indices modulo \(n + 1\); also, the triples that have $s_n$ and $s_{n + 1}$ as their first components have $n$ and \(n + 1\) as their second components.  At the end of the second round, the first and third components of any triple are 4 positions apart in $s$ and the triples that have $s_{n - 2}$, $s_{n - 1}$, $s_n$ and $s_{n + 1}$ as their first components have \(n - 2\), \(n - 1\), $n$ and \(n + 1\) as their second components.  After \(O (\log n)\) rounds, when the second components are the numbers from 1 through \(n + 1\) in some order, they indicate the positions in $s$ of the triples' first components.  Sorting by the second component and ignoring the special character, we can recover $s$.  In our example in Figure~\ref{fig:BWT Dec}, the triple that starts `m' then has a 1 as its second component; the triples that start `i' then have 2, 5, 8 and 11; the triples that start `p' then have 9 and 10; and the triples that start `s' then have 3, 4, 6 and 7; thus, sorting them by the second component and outputting the first, we recover \(\mathrm{m i s s i s s i p p i}\).  Again, we need only \(O (\log n)\) bits of memory for this procedure; each sorting step takes \(O (\log n)\) passes and each other step takes \(O (1)\) passes, so we use \(O (\log^2 n)\) passes altogether. \qed
\end{proof}

Grohe and Schweikardt showed that, given a sequence of $n$ numbers \(x_1, \ldots, x_n\), each of \(2 \log n\) bits, we cannot generally sort them using \(o (\log n)\) passes, \(O (n^{1 - \epsilon})\) bits of memory and \(O (1)\) tapes, for any positive constant $\epsilon$.  We can easily obtain the same lower bound for the BWT, via the following reduction from sorting: given \(x_1, \ldots, x_n\), using one pass, \(O (\log n)\) bits of memory and two tapes, for \(1 \leq i \leq n\) and \(1 \leq j \leq 2 \log n\), we replace the $j$th bit \(x_i [j]\) of $x_i$ by \(x_i [j]\ 2\ x_i\ i\ j\), writing 2 as a single character, $x_i$ in \(2 \log n\) bits, $i$ in \(\log n\) bits and $j$ in \(\log \log n + 1\) bits; notice the resulting string is of length \(n (3 \log n + \log \log n + 2)\).  (For simplicity, we now consider each character's context as starting at its successor and extend forwards.)  The only characters followed by 2s in this string are the bits at the beginning of replacement phrases so, if we perform the BWT of it, the last \(2 n \log n\) characters of the transformed string are the bits of \(x_1, \ldots, x_n\); moreover, since the lexicographic order of equal-length binary strings is the same as their numeric order, the \(x_i [j]\)s will be arranged by $x_i$s, with ties broken by the $i$s (so if \(x_i = x_{i'}\) with \(i < i'\), then every \(x_i [j]\) comes before every \(x_{i'} [j']\)), and further ties broken by the $j$s; thus, the last \(2 n \log n\) bits of the transformed string are \(x_1, \ldots, x_n\) in sorted order.

\bibliographystyle{plain}
\bibliography{bounds}

\end{document}